\newtheorem{thm}{Theorem}
\newtheorem{lemma}{Lemma}
\newtheorem{remark}{Remark}
\newtheorem{corol}{Corollary}
\newtheorem{prop}{Proposition}
\newtheorem{defn}{Definition}
\providecommand{\thmref}[1]{Theorem~\ref{#1}}
\providecommand{\defnref}[1]{Definition~\ref{#1}}
\providecommand{\secref}[1]{Section~\ref{#1}}
\providecommand{\propref}[1]{Proposition~\ref{#1}}
\providecommand{\remref}[1]{Remark~\ref{#1}}
\providecommand{\figref}[1]{Figure~\ref{#1}}
\providecommand{\appref}[1]{Appendix~\ref{#1}}
\providecommand{\algoref}[1]{Algorithm~\ref{#1}}
\newcommand{\bm}[1]{\mbox{\boldmath{$#1$}}}
\newcommand{\SNR}{\text{SNR}}
\newcommand{\SDR}{\text{SDR}}
\newcommand{\mS}{\mathcal{S}}
\newcommand{\Comment}[1]{}
\newcommand{\old}[1]{}
\newcommand{\rem}[1]{}
\newcommand{\tx}{\tilde x}
\newcommand{\tz}{\tilde z}
\newcommand{\tT}{\tilde T}
\newcommand{\tU}{\tilde U}
\newcommand{\ty}{\tilde y}
\newcommand{\bt}{\bm t}
\newcommand{\by}{\bm y}
\newcommand{\bx}{{\bm x}}
\newcommand{\br}{{\bm r}}
\newcommand{\brho}{{\bm \rho}}
\newcommand{\bz}{{\bm z}}
\newcommand{\bmu}{{\bm \mu}}
\providecommand{\comment}[1]{}
\newcommand{\beqn}[1]{\begin{eqnarray}\label{#1}}
\newcommand{\eeqn}{\end{eqnarray}}
\newcommand{\beq}[1]{\begin{equation}\label{#1}}
\newcommand{\eeq}{\end{equation}}
\newcommand{\trace}[1]{\mathrm{trace}\left( #1 \right)}
\providecommand{\var}[1]{{\rm Var\left( #1 \right)}}
\providecommand{\Ddef}{\triangleq}
\newcommand{\tby}{\tilde {\bm y}}
\newcommand{\tbx}{\tilde {\bm x}}
\newcommand{\tbz}{\tilde {\bm z}}
\newcommand{\digital}{\text{digital}}
\newcommand{\analog}{\text{analog}}
\begin{document}

\title{Joint Unitary Triangularization for MIMO Networks}

\author{
\authorblockN{Anatoly Khina}
\authorblockA{Dept. of EE-Systems,\\
Tel-Aviv University \\
Email: anatolyk@eng.tau.ac.il}
\and
\authorblockN{Yuval Kochman}
\authorblockA{EECS Dept., \\
MIT \\
Email: yuvalko@mit.edu}
\and
\authorblockN{Uri Erez}
\authorblockA{Dept. of EE-Systems,\\
Tel-Aviv University \\
Email: uri@eng.tau.ac.il}
\thanks{
The work of the second author was supported in part by the National Science Foundation under grant No. CCF-1017772
and by a grant from Hewlett-Packard Laboratories.

The work of the third author was supported in part by the U.S.~-- Israel Binational Science
Foundation under grant 2008/455.

The material in this paper was presented in part at the 48th Annual Allerton Conference on Communication, Control and Computing, 2010. Another part of this work is to be presented ICASSP~2011, Prague, Czech Republic.}
}

\maketitle




\begin{abstract}
This work considers communication networks where individual links can be described as MIMO channels. Unlike orthogonal modulation methods (such as the singular-value decomposition), we allow interference between sub-channels, which can be removed by the receivers via successive cancellation. The degrees of freedom earned by this relaxation are used for obtaining a basis which is simultaneously good for more than one link. Specifically, we derive necessary and sufficient conditions for shaping the ratio vector of sub-channel gains of two broadcast-channel receivers. We then apply this to two scenarios: First, in digital multicasting we present a practical capacity-achieving scheme which only uses scalar codes and linear processing. Then, we consider the joint source-channel problem of transmitting a Gaussian source over a two-user MIMO channel, where we show the existence of non-trivial cases, where the optimal distortion pair (which for high signal-to-noise ratios equals the optimal point-to-point distortions of the individual users) may be achieved by employing a hybrid digital-analog scheme over the induced equivalent channel. These scenarios demonstrate the advantage of choosing a modulation basis based upon multiple links in the network, thus we coin the approach ``network modulation''.
\end{abstract}

\vspace{1mm}

{\keywords Broadcast channel, MIMO, multicasting, generalized triangular decomposition, geometric mean decomposition, GSVD, GDFE, multiplicative majorization, joint source-channel coding.}

\section{Introduction}
\label{sec:intro}

The choice of modulation domain plays a major role in communication, both in deriving performance limits and in the design of practical
schemes which decouple the signal processing task of channel equalization from coding.
Thus, choosing the ``right" basis is of central importance.
For example, the capacity of the Gaussian inter-symbol interference (ISI) channel is given by the water-filling solution, applied in the frequency domain; the same transformation also allows  to use popular schemes such as Orthogonal Frequency-Division Multiplexing (OFDM) which employs the discrete Fourier transform. The singular value decomposition (SVD) plays a similar role for multiple-input multiple-output (MIMO) channels. Common to both cases is \emph{diagonalization}: They yield parallel independent equivalent channels.
But do we really need such orthogonality?
Capacity can be achieved for both the ISI and MIMO channels using \emph{non-orthogonal} equivalent channels, by a receiver which performs \emph{triangularization} of the channel\footnote{Outside the high signal-to-noise ratio regime, ``near triangularization'' is performed as an optimal balance between residual interference and noise.} (rather than diagonalization) and then decision-feedback equalization or successive interference cancellation (SIC), see e.g.~\cite{Wolniansky_V-BLAST}.  This is done without performing any transformation at the transmitter. It  is therefore natural to ask, what can be achieved by allowing \emph{both} a transmitter transformation (in addition to the receiver one) and SIC.

One such direction, pursued by Jiang, Hager and Li \cite{GTD}, is the \emph{generalized triangular decomposition} (GTD): A matrix $A$ is decomposed as
\begin{align*}
    A = U T V^\dagger \,,
\end{align*}
where $U$ and $V$ are unitary matrices, $V^\dagger$ denotes the conjugate transpose of $V$ and $T$ is upper-triangular with a prescribed diagonal. It is shown in \cite{WeylCondition,WeylConditionInverse_ByHorn} that the transforming matrices $U$ and $V$ exist if and only if the (desired) diagonal elements of $T$ obey Weyl's multiplicative majorization relation with the singular values of $A$ (see also \cite{PalomarJiang}). Since the product of these diagonal elements equals the product of the singular values of $A$, the decomposition performs \emph{diagonal shaping}: It distributes the total gain between the diagonal elements in a desired way. An important special case is where balanced gains are sought, i.e., the diagonal elements of $T$ should all be equal. In that case, named the \emph{geometric mean decomposition} (GMD) \cite{GMD}, the majorization condition holds for any $A$. When applied to MIMO communication, GMD has an advantage over SVD, that all subchannels enjoy the same gain, and thus may support codebooks of the same rate, avoiding the need for a bit-loading mechanism. This comes at the price of performing SIC at the receiver. The GMD has received considerable attention; see, e.g., \cite{GMD-ISI,ShenoudaDavidson,GMD-Transform}  for some of its applications.

We take a different path, in which we wish to jointly shape the diagonals of two matrices, for the purpose of multi-terminal communication. Since with this approach the choice of basis depends upon more than one communication link, we call it \emph{network modulation}. We jointly triangularize two matrices $A_1$ and $A_2$ as
\beq{STUD}
    A_i = U_i T_i V^\dagger \,, \quad i=1,2 \,,
\eeq
where $U_1$, $U_2$ and $V$ are unitary and $T_1$ and $T_2$ are upper-triangular. Having the same matrix $V$ on one of the sides of the decomposition corresponds to applying the same transformation, and is thus suitable to two links originating (or terminating) at the same node.
It turns out that, in different network applications, it is important to shape the vector of \emph{ratios} between the diagonals. We show that the sufficient and necessary condition for achievability of a ratio vector is a multiplicative majorization relation with the generalized singular values \cite{VanLoan76} of the pair $(A_1,A_2)$.

We start by deriving the necessary and sufficient conditions for joint unitary triangularization of two matrices, in \secref{sec:GGTD}. In the rest of the paper we apply this result in two different scenarios, where in one we present an optimal practical scheme for a problem for which the capacity is known, and in the second we derive the (hitherto unknown) optimal performance.
In \secref{sec:MulticastingScheme} we combine the joint triangularization with the concept of SIC, to
present an optimal scheme for two-user digital multicasting that employs linear processing of scalar codebooks.
In \secref{sec:JSCC}, we address the problem of transmission of an \emph{analog} source over two MIMO links,
where we show that a ratios vector of all-ones except for one element creates an equivalent channel over which a hybrid digital-analog (HDA) scheme can achieve the optimal tradeoff between user distortions; thus we derive the optimal performance whenever the channels are such that this ratios vector is feasible. We conclude the paper in \secref{sec:conclusions}.

We note that the decomposition may equally be applied to cases where two transmitters communicate with a joint receiver via MIMO links (a MIMO MAC channel). In this case the roles of the $U$ and $V$ matrices in \eqref{STUD} are interchanged. An application of the decomposition in such a setting is a MIMO extension of the ``physical network coding'' approach to bi-directional relays \cite{WilsonRelays}. This application is beyond the scope of this paper, and appears in \cite{JET:TwoWayRelayISIT11}.


\section{Joint Unitary Triangularization}
\label{sec:UnitaryTriang.}

In this section we present the joint unitary triangular decomposition of two matrices.
We start by introducing a few notations and recalling known decompositions for a single matrix.

\subsection{Unitary Triangularization of a Single Matrix} \label{sec:GTD}
Throughout the work, we will only need to decompose matrices which belong to the following class.

\vspace{3mm}
\begin{defn}[Proper dimensions]
    An $m \times n$ matrix $A$ is said to have proper dimensions if it is full-rank and $m \geq n$.
\vspace{2mm}
\end{defn}

The singular-value decomposition (SVD, see \cite{GolubVanLoan3rdEd}) of a matrix $A$ of proper dimensions $m \times n$, is given by:
\begin{align}
\label{eq:UnitaryTriang.}
    A = U T V^\dagger \,,
\end{align}
where $U$ and $V$ are unitary matrices, and $T$ is an $m \times n$ (generalized) diagonal matrix, viz.\ $T_{i,j} = 0$ for $i \neq j$.

\vspace{2mm}
\begin{remark} \label{rem:phase}
Throughout the paper, we will assume in all decompositions that all the diagonal elements $T_{i,i}$ are real and non-negative. This is without loss of generality, since any phase can be absorbed in $U$ and $V$.
\vspace{2mm}
\end{remark}

The diagonal elements of $T$ are called the the singular values (SV) of $A$; they equal the square-roots of the eigenvalues of $A^\dagger A$. Since we assumed $A$ to be full-rank, all its SVs are strictly positive. We define the SV vector $\bmu(A)$ as the $n$-dimensional vector composed of all SVs (including their algebraic multiplicity), ordered non-increasingly. $\mu$(A) is unique, i.e., there is no other diagonalization, up to ordering and phases of the diagonal.

Unitary triangularization coined generalized triangular decomposition (GTD, see \cite{GTD}) is a generalization of the SVD to triangular matrices $T$. The class of matrices is formally defined as follows.

\vspace{3mm}
\begin{defn}[Square part]
\label{def:square}
    Let $A$ be a matrix of dimensions $m \times n$ where $m \geq n$. The square part of $A$, denoted $[A]$, consists of the first $n$ rows of $A$.
\vspace{2mm}
\end{defn}
\vspace{3mm}
\begin{defn}[Generalized triangular matrix]
    Let $T$ be a matrix of proper dimensions. We call $T$ a generalized triangular matrix, if $T_{i,j} = 0$ for $i > j$, i.e.,
    it has the block structure
    \[
        T = \left( \begin{array}{c}
                    [T] \\
                    0
                   \end{array}
            \right) \]
    where the square part $[T]$ is upper-triangular.
\vspace{2mm}
\end{defn}

As for the SVD, we assume non-negative diagonal elements (see Remark~\ref{rem:phase}), and they are all positive under the full-rank assumption.
Note that for any unitary triangularization,
\beq{triangular_det}
    \det\left(A^\dagger A\right) = \det\left(T^\dagger T\right) = \left(\det [T]\right)^2 = \prod_{j=1}^n (T_{j,j})^2 \,.
\eeq

It turns out, that the singular values are an extremal case for the diagonal of all possible unitary triangularizations. For stating this, we need the following.

\vspace{3mm}
\begin{defn}[Multiplicative majorization (see \cite{PalomarJiang})]
\label{def:major}
    Let $\bx$ and $\by$ be two $n$-dimensional vectors of positive elements. Denote by $\tbx$ and $\tby$ the vectors composed of the entries of $\bx$ and $\by$, respectively, ordered non-increasingly. We say that $\bx$ majorizes $\by$ ($\bx \succeq \by$) if they have equal products:
    \[
        \prod_{j=1}^n  x_j  =  \prod_{j=1}^n y_j  \,,
    \]
    and their (ordered) elements satisfy
for any $1 \leq k < n$,
    \[
        \prod_{j=1}^k  \tx_j  \geq  \prod_{j=1}^k  \ty_j  \,.
    \]
\vspace{2mm}
\end{defn}

In these terms, we can give the condition of the GTD \cite{GTD} for the existence of a unitary triangularization: Let $A$ be a matrix of proper dimensions $m \times n$ and $\bt$ be an $n$-dimensional vector of positive elements. Then, there exists a unitary triangularization of $A$ with diagonal $\bt$, i.e., $A$ can be decomposed as in \eqref{eq:UnitaryTriang.} with $T$ being some generalized upper-triangular matrix with the prescribed diagonal $\bt$, if and only if the latter is majorized by the singular-values vector of $A$:
    \begin{align*}
        \bmu(A) \succeq \bt \,.
    \end{align*}

\subsection{Joint Triangularization with Shaped Diagonal Ratio}
\label{sec:GGTD}

The SVD presented in \secref{sec:GTD} is essentialy unique. Thus, in general, two matrices cannot be jointly diagonalized by unitary matrices. Nevertheless, joint \emph{triagonalization} \eqref{STUD} is possible. In this section we prove the necessary and sufficient conditions for such triangularization, formally defined as follows.

\vspace{3mm}
\begin{defn}[Joint Unitary Triangularization] \label{def:GGTD}
Let $A_1$ and $A_2$ be matrices of proper dimensions with the same number of columns. A joint decomposition:
\begin{align}
\label{eq:JointTriang.}
    \begin{aligned}
        A_1 &= U_1 T_1 V^\dagger \\
        A_2 &= U_2 T_2 V^\dagger \,,
    \end{aligned}
\end{align}
is called a joint unitary triangularization if $U_1$, $U_2$ and $V$ are unitary, and $T_1$ and $T_2$ are generalized upper-triangular matrices of the same dimensions as $A_1$ and $A_2$, respectively.
\vspace{2mm}
\end{defn}

The existence condition for this joint decomposition turns out to be similar to that of the GTD (see \secref{sec:GTD}), where the SVs are replaced by generalized singular values, and the diagonal of $T$ is replaced by the ratio of the diagonals of $(T_1,T_2)$. These quantities are defined below.

\vspace{3mm}
\begin{defn}[Generalized singular values \cite{VanLoan76}]
\label{def:GSV}
    For any (ordered) matrix pair $(A_1,A_2)$, the generalized singular values (GSVs) are the positive solutions $a$ of the equation
    \[
        \det\left\{A_1^\dagger A_1 - a^2 A_2^\dagger A_2 \right\} = 0 \,.
    \]
    Let the GSV vector $\bmu(A_1,A_2)$ be the vector composed of all GSVs (including their algebraic multiplicity), ordered non-increasingly.
\vspace{2mm}
\end{defn}

\begin{remark}
 For matrices of proper dimensions we have $n$ GSVs, all positive (recall Remark~\ref{rem:phase}) and finite.
For non full-rank matrices we still have $n$ GSVs, even if the number of finite solutions is smaller. We define a GSV to be infinite, if the corresponding GSV of the matrices in reversed order is zero. If the number of finite and infinite solutions is smaller than $n$, this suggests that the column rank can be reduced without changing the problem; we shall assume the problem is in its reduced form.
\end{remark}
\vspace{2mm}
\begin{remark}
\label{rem:GSV_coincides_with_SV}
        When $A_1$ and $A_2$ are square and non-singular,  $\bmu(A_1,A_2)$ consists of the singular values of $A_1 A_2^{-1}$ \cite{VanLoan76,GolubVanLoan3rdEd}.
\end{remark}

\vspace{3mm}
\begin{defn}[Diagonal ratios vector]
    Let $T_1$ and $T_2$  be two generalized upper-triangular matrices of proper dimensions $m_1 \times n$ and $m_2 \times n$, respectively, with positive diagonal elements. The diagonal ratios vector $\br(T_1,T_2)=\br([T_1],[T_2])$ is the  $n$-length vector whose $j$-th entry is equal to the diagonal ratio
    $T_{1;j,j} /  T_{2;j,j}$, where $T_{i;j,k}$ denotes the $(j,k)$ entry of $T_i$ ($i=1,2$).
\vspace{2mm}
\end{defn}

We are now ready to prove the main result of this section, giving the condition for existence of joint unitary triangularization in terms of majorization (recall Definition~\ref{def:major}). Since the existence proof is constructive, it results with a decomposition procedure; this is summarized in \algoref{alg:SquareSTUD} for the simpler case of square matrices, and in \algoref{alg:STUD}~-- for the general case.

\begin{algorithm}[t]
    \caption{\textbf{: Joint Triangularization of Square Matrices}}
    \centering
    \fbox{
        \begin{minipage}{.94\linewidth}
            \begin{itemize}\addtolength{\itemsep}{0.5\baselineskip}
            \item
                Compute $B \triangleq A_1 A_2^{-1}$.
            \item
                Apply the GTD to $B$ (for details regarding the implementation, including Matlab code, see \cite{GTD}):
                \begin{align*}
                    B = U_1 R U_2^\dagger \,,
                \end{align*}
                with $R$ having a diagonal equal to (the desired diagonal ratio) $\br(T_1,T_2)$.
            \item
                Decompose $U_i^\dagger A_i$ according to the RQ decomposition:
                \begin{align*}
                    U_1^\dagger A_1 &= T_1 V^\dagger \\
                    U_2^\dagger A_2 &= T_2 V^\dagger \,.
                \end{align*}
            \end{itemize}
        \end{minipage}
    }
\label{alg:SquareSTUD}
\end{algorithm}

\begin{algorithm}[t]
    \caption{\textbf{: Joint Triangularization}}
    \centering
    \fbox{
        \begin{minipage}{.94\linewidth}
            \begin{itemize}\addtolength{\itemsep}{0.5\baselineskip}
            \item
                Apply (individual) QR decompositions to $A_1$ and $A_2$:
                \begin{align*}
                    A_i = Q_i R_i \qquad i=1,2 \,.
                \end{align*}
            \item
                Extract $[R_i]$, the upper square $n \times n$ part of $R_i$.
            \item
                Apply \algoref{alg:SquareSTUD} to $[R_1]$ and $[R_2]$ to obtain a joint decomposition
                \begin{align*}
                    [R_1] &= \tU_1 \tT_1 V^\dagger \\
                    [R_2] &= \tU_2 \tT_2 V^\dagger \,,
                \end{align*}
                where the diagonal ratio vector of $(\tT_1,\tT_2)$ is equal to the desired diagonal ratio, i.e., $\br(\tT_1,\tT_2) = \br(T_1,T_2)$.
            \item
                Construct the matrices:
                \begin{align*}
                T_i =
                    \left(
                      \begin{array}{c}
                	\tT_i \\
                	0 \\
                      \end{array}
                    \right) \,,
                \end{align*}
                where the lower zero block is of dimensions \mbox{$(m_i-n) \times n$}, and
                \begin{align*}
                    U_i = Q_i \left(
                	    \begin{array}{cc}
                	      \tU_i & 0 \\
                	      0 & I_{m_i-n} \\
                	    \end{array}
                	  \right) \,.
                \end{align*}
            \end{itemize}
        \end{minipage}
    }
\label{alg:STUD}
\end{algorithm}

\vspace{3mm}
\begin{thm}
\label{thm:GGTD}
    Let $A_1$ and $A_2$ be matrices of proper dimensions  $m_1 \times n$ and $m_2 \times n$, respectively, and $\br$ be an $n$-dimensional vector with positive elements. Then, there exists a joint unitary triangularization of $(A_1,A_2)$ with diagonal ratio vector $\br$, i.e., the matrices can be decomposed as:
    \begin{align}
    \label{eq:JointTriang.}
        \begin{aligned}
            A_1 &= U_1 T_1 V^\dagger \\
            A_2 &= U_2 T_2 V^\dagger\,,
        \end{aligned}
    \end{align}
       where $U_1$, $U_2$ and $V$ are unitary, and $T_1$ and $T_2$ are some generalized upper-triangular matrices with the prescribed diagonal ratio (the $i$-th element of $\br(T_1,T_2)$ equals $r_i$),
    if and only if the latter is majorized by the GSV vector of $(A_1,A_2)$:
    \begin{align}
    \label{eq:GWeyl}
        \bmu(A_1,A_2) \succeq \br \, .
    \end{align}
   \vspace{2mm}
\end{thm}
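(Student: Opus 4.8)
The plan is to prove the two directions separately, using the single-matrix GTD condition recalled in \secref{sec:GTD} as the main engine, after first reducing the generalized singular values of the \emph{pair} to ordinary singular values of a single matrix.

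For necessity, I would suppose that a joint triangularization \eqref{eq:JointTriang.} exists. Since each $T_i$ is generalized upper-triangular, $T_i^\dagger T_i = [T_i]^\dagger [T_i]$, whence $A_i^\dagger A_i = V [T_i]^\dagger [T_i] V^\dagger$. Substituting into the defining equation of Definition~\ref{def:GSV} and using $\det\left(V X V^\dagger\right)=\det X$ for unitary $V$, the GSVs of $(A_1,A_2)$ coincide with those of the square upper-triangular pair $([T_1],[T_2])$. By Remark~\ref{rem:GSV_coincides_with_SV} the latter equal the singular values of $M \triangleq [T_1][T_2]^{-1}$, which is itself upper-triangular with diagonal $\br$. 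Viewing $M = I\,M\,I^\dagger$ as a trivial unitary triangularization, the ``only if'' part of the GTD condition gives $\bmu(M) \succeq \br$, i.e.\ $\bmu(A_1,A_2) \succeq \br$.

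For sufficiency I would argue constructively, following \algoref{alg:SquareSTUD} in the square case and then \algoref{alg:STUD} in general. In the square case, set $B \triangleq A_1 A_2^{-1}$; by Remark~\ref{rem:GSV_coincides_with_SV}, $\bmu(B) = \bmu(A_1,A_2)$, so the hypothesis together with the ``if'' part of the GTD condition yields $B = U_1 R U_2^\dagger$ with $R$ upper-triangular and $\diag(R) = \br$. RQ-decomposing $U_2^\dagger A_2 = T_2 V^\dagger$ and setting $T_1 \triangleq R T_2$, we obtain $U_1^\dagger A_1 = R U_2^\dagger A_2 = R T_2 V^\dagger = T_1 V^\dagger$ with the \emph{same} $V$. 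The decisive point is that $T_1 = R T_2$ is again upper-triangular (a product of two such matrices), with diagonal entries $R_{jj}\,(T_2)_{jj}$; hence $\br(T_1,T_2) = \diag(R) = \br$, as required. For the general case, I would QR-decompose $A_i = Q_i R_i$; since $A_i^\dagger A_i = [R_i]^\dagger [R_i]$, the pair $([R_1],[R_2])$ has the same GSVs as $(A_1,A_2)$, so the square-case construction applies to $([R_1],[R_2])$, and padding $T_i$ and $U_i$ with zeros and $Q_i$ exactly as prescribed in \algoref{alg:STUD} reassembles a valid joint triangularization of $(A_1,A_2)$ with the prescribed ratio.

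The conceptual crux, and the step I expect to require the most care, is the reduction of the \emph{generalized} singular value problem to an \emph{ordinary} one: recognizing via Remark~\ref{rem:GSV_coincides_with_SV} that $\bmu(A_1,A_2)$ are the singular values of $A_1 A_2^{-1}$ is precisely what lets the single-matrix GTD govern the joint decomposition. The remaining points are routine but should be verified: that the proper-dimensions (full-rank) assumption guarantees invertibility of $A_2$, $[T_2]$ and $[R_2]$ so that $B$, $M$ and the RQ steps are well-defined, and that the shared $V$ emerges automatically from the closure of upper-triangular matrices under multiplication.
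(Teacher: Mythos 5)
Your proof is correct and takes essentially the same route as the paper: necessity by applying Weyl's condition (the ``only if'' part of the GTD) to the upper-triangular matrix $[T_1][T_2]^{-1}$, whose singular values are the GSVs of $(A_1,A_2)$ and whose diagonal is $\br$; and sufficiency, in the square case, via the GTD of $B = A_1 A_2^{-1}$ followed by RQ decomposition, with the identical QR-based padding reduction for non-square matrices. The one place you diverge is a small but genuine streamlining of the square-case construction: where the paper RQ-decomposes \emph{both} $U_1^\dagger A_1$ and $U_2^\dagger A_2$ and then argues that the unitary matrix $V_1^\dagger V_2 = T_1^{-1} R T_2$ is simultaneously unitary and upper-triangular with positive diagonal, hence the identity (forcing $V_1 = V_2$), you RQ-decompose only $U_2^\dagger A_2 = T_2 V^\dagger$ and \emph{define} $T_1 \triangleq R T_2$, so that the shared $V$, the triangularity of $T_1$, and the diagonal ratios $R_{jj}$ all emerge by construction, with no rigidity lemma needed.
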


\begin{proof} \textbf{Achievability part.}
    We present here a proof for the case when the matrices are square ($m_1\! =\! m_2=n$). The extension to the general proper-dimension case is relegated to \appref{app:Proof_GGTD_non-square}.

    In the square case, $A_1$ and $A_2$ must be invertible, being full-rank.
    Define the matrix $B = A_1 A_2^{-1}$.
    The SV vector of $B$, $\bmu(B)$, coincides with the GSV vector of $(A_1,A_2)$, $\bmu(A_1,A_2)$ (recall \remref{rem:GSV_coincides_with_SV}).
    Thus, it majorizes $\br$, by assumption.
    Hence, according to the GTD (see \secref{sec:GTD}), the matrix $B$ can be decomposed as
    \begin{align}
    \label{eq:A1*inv(A2)=U1R1U2'}
        B = U_1 R U_2^\dagger \,,
    \end{align}
    where $U_1$ and $U_2$ are unitary and $R$ is upper-triangular with a diagonal vector which equals $\br$.
     Now, apply RQ decompositions to $U_i^\dagger A_i$ ($i=1,2$) to achieve
    \begin{align}
    \label{eq:Ai=UiRiVi'}
        U_i^\dagger A_i = T_i V_i^\dagger \,,
    \end{align}
    where $T_i$ are upper-triangular with positive diagonal entries and $V_i$ are unitary.
    By substituting \eqref{eq:Ai=UiRiVi'} into \eqref{eq:A1*inv(A2)=U1R1U2'} we have
    \begin{align*}
        U_1 T_1 V_1^\dagger V_2 T_2^{-1} U_2^\dagger &= U_1 R U_2^\dagger \,,
    \end{align*}
    which is equivalent to
    \begin{align}
    \label{eq:Unitary=Triang}
        V_1^\dagger V_2 = T_1^{-1} R T_2 \,.
    \end{align}
    We note that the l.h.s. of \eqref{eq:Unitary=Triang} is unitary, whereas its r.h.s. is an upper-triangular matrix
    with positive diagonal entries. An equality between such matrices can hold only if both matrices are equal to the identity matrix of the appropriate dimensions ($n \times n$).
    Thus, we have
    \begin{align*}
        V & \triangleq V_1 = V_2 \,, \\
        T_{1;i,i} &= R_{i,i} T_{2;i,i} \,, \qquad i=1,...,n \,.
    \end{align*}

    Since the diagonal of $R$ is equal to $\br$, this establishes the desired decomposition \eqref{eq:JointTriang.}.

    \textbf{Converse part.}
    Assume, in contradiction, that $A_1$ and $A_2$ can be decomposed as in \eqref{eq:JointTriang.} such that $\bmu(A_1,A_2) \nsucceq \br(T_1,T_2)$. Note that $\bmu(T_1,T_2) = \bmu(A_1,A_2)$.
    Moreover, $[T_1]$ and $[T_2]$ are non-singular square matrices of dimensions $n \times n$ with GSV and diagonal-ratios vectors which are equal to those of $(T_1,T_2)$, i.e.,
    \begin{align*}
        \bmu([T_1],[T_2]) & = \bmu(T_1,T_2) = \bmu(A_1,A_2) \,, \\
    \br([T_1],[T_2]) & =  \br(T_1,T_2) \, .
    \end{align*}
    Thus $\bmu([T_1],[T_2]) \nsucceq \br([T_1],[T_2])$, which in turn implies that the upper-triangular matrix \mbox{$B \triangleq [T_1] [T_2]^{-1}$}
    has a diagonal $\br([T_1],[T_2])$ and an SV vector $\bmu([T_1],[T_2])$. But according to Weyl's condition~ \cite{WeylCondition}:
    \begin{align*}
        \bmu(A_1,A_2) = \bmu([T_1],[T_2]) \succeq \br([T_1],[T_2]) = \br(T_1,T_2) \,,
    \end{align*}
    in contradiction to the assumption.
    \vspace{2mm}
\end{proof}
\vspace{2mm}
\begin{remark}
    Note that we did not require the matrices $T_1$ and $T_2$ to satisfy Weyl's condition individually, as we did not strive to design specific diagonal values but rather prescribed \emph{ratios} between the diagonals of $T_1$ and $T_2$. Indeed, one may verify that the resulting matrices in \thmref{thm:GGTD} satisfy Weyl's condition individually.
\end{remark}
\vspace{2mm}
\begin{remark}
    By the unitarity of $U_1$, $U_2$ and $V$, the products of $\bmu$ and $\br$ are equal. Thus, the majorization relations mean that the diagonal ratios are always ``less spread'' than the generalized singular values. This is also true for the (individual) diagonal values of $T_i$ ($i=1,2$) being ``less spread'' than the singular values of $A_i$.
\end{remark}
\vspace{2mm}
\begin{remark}[Relation to GSVD]
\label{rem:GSVD}
    The GSVD \cite{VanLoan76} can be stated in a triangular form \eqref{eq:JointTriang.}, with diagonals ratio $\br(T_1,T_2)=\bmu(A_1,A_2)$. Thus, the GSVD is a limiting case with maximal ratios spread.
\end{remark}
\vspace{2mm}
\begin{remark}[Relation to GTD]
    Taking in the joint decomposition $A_2=I$ yields the GTD of $A_1$ \cite{GTD}; further, the GSV vector becomes the SV vector of $A_1$. The existence condition, in turn, reduces to the Weyl condition (see e.g. \cite{GTD}). In this sense, the condition in Theorem~\ref{thm:GGTD} may be seen as a generalized Weyl condition for joint triangularization.
\end{remark}
\vspace{2mm}
\begin{remark}[Relation to the generalized Schur decomposition]
  This decomposition, also called the QZ-decomposition \cite{GolubVanLoan3rdEd}, is a special case of the joint triangularization \eqref{eq:JointTriang.} with $U_1=U_2$. It can be shown that the diagonal ratio vector induced by this decomposition is unique, i.e., requiring that the unitary matrices are the same on both sides prohibits shaping of the diagonal ratios.
\end{remark}
\vspace{2mm}
\begin{remark}[Fixed diagonal ratio]
\label{rem:FixedRatio}
    Note that any vector $\bmu$ majorizes the vector where all elements equal to its geometric mean. As a consequence, for any two matrices there exists a joint decomposition with fixed diagonal ratio. We use this fact in Section~\ref{subs:Mutlicasting}.
\end{remark}
\vspace{2mm}
The joint unitary triangularization (and, as a special case, the GTD) can also be relaxed to a block form:
\begin{align}
\label{eq:BlockTriangular}
    T_i = \left(
          \begin{array}{cccc}
            T_{i;1,1} & T_{i;1,2} & \cdots & T_{i;1,K} \\
            0 & T_{i;2,2} & \cdots & T_{i;2,K} \\
            \vdots & \ & \ddots & \vdots \\
            0 & \cdots & 0 & T_{i;K,K} \\
            0 & \cdots & 0 & 0 \\
          \end{array}
        \right) \,,
\end{align}
where $T_{i;k,l}$ is a block of dimensions $n_k \times n_l$, such that $\sum_{k=1}^K n_k = n$  (thus the last row of blocks consists of $n-m$ all-zero rows).
Note that we require corresponding blocks in both matrices to be of the same dimensions, except for the last row of zero blocks.

The existence condition can be stated as the following extension of Theorem~\ref{thm:GGTD}.
Denote by $\left\{ k_l \right\}_{l=1}^K$ the indices satisfying:
\begin{align*}
    r_{k_1} \geq r_{k_2} \geq \cdots \geq r_{k_K} \,,
\end{align*}
where $r_k \triangleq \sqrt[n_k]{\left| \det\left( T_{1;k,k} \right) / \det\left( T_{2;k,k} \right) \right|}$.
Denote by $\brho$ the $K$-length vector composed of $\left\{r_k^{n_k} \right\}$ (i.e., $\left\{\rho_k\right\}_{k=1}^K$ the absolute values of the ratios between determinants of corresponding blocks), ordered in non-increasing order of $\{r_k\}$.
Further denote by $\bmu$ the $K$-length vector composed of the products of sizes $\{k_l\}_{l=1}^K$ of the GSVs of $(A_1,A_2)$ ordered non-increasingly, i.e., the first entry of $\bmu$ is the product of the largest $k_1$ GSVs, its second entry is the product of the next $k_2$ GSVs, etc. Then, the desired block triangularization is possible if and only if the products of the entries of $\brho$ and of the GSV vector are equal,
and for any $1 \leq k < K$:\footnote{These conditons are similar to the majorization conditions of \defnref{def:major} up to the ordering which is done w.r.t. $r_k$ and not those of the entries of $\brho$.}
\begin{align*}
    \prod_{l=1}^k \rho_l \leq \prod_{l=1}^k \mu_l \,.
\end{align*}

\section{Transmission Scheme for Multicasting}
\label{sec:MulticastingScheme}

In this section we derive an optimal \emph{practical} communication scheme for two-user multicasting over Gaussian MIMO BC channels. We start by recalling how the single-user Gaussian MIMO capacity may be achieved using multiple codebooks (each designed for a scalar AWGN channel) with SIC over an equivalent channel, obtained by unitary triangularization (as in Section~\ref{sec:GTD}).

\subsection{SIC for MIMO Channels}
\label{subs:SingleUserSIC}

The exposition below follows that of the universal matrix decomposition (UCD) \cite{UCD}, which is in turn based upon the derivation of the MMSE version of V-BLAST, see, e.g., \cite{HassibiVBLAST}.
Later in the paper we take the triangularization to be one which is simultaneously good for two users. This is suppressed for now. We assume throughout the paper perfect channel knowledge everywhere.

We consider a point-to-point (complex) MIMO channel:
\beq{MIMO}
    \by = H \bx + \bz,
\eeq
where $\bx$ is the channel input of dimensions $N_t \times 1$ subject to an average power constraint $P$;\footnote{Alternatively, one can consider an input
covariance constraint $C_\bx \triangleq E\left[ \bx \bx^\dagger
\right] \preceq C$, where by $C_1 \preceq C_2$ we mean that the
matrix $\left(C_2 - C_1\right)$ is positive semi-definite.} $\by$ is the channel output vector of dimensions
$N_r \times 1$; $H$ is the channel matrix of dimensions $N_r \times N_t$ and $\bz$ is an additive circularly-symmetric Gaussian noise vector of dimensions $N_r \times 1$. Without loss of generality, we assume that the noise elements are mutually-independent, identically-distributed with unit variance.

The capacity of this channel is given by
\beq{MIMO_capacity}
    C(H,P) = \max_{C_\bx} I(H,C_\bx) ,
\eeq
where the maximization is over all channel input covariance matrices $C_\bx \geq 0$, subject to the power constraint $\trace{C_\bx}\leq P$, and \footnote{All logarithms in this paper are to the base 2.}
\beq{MIMO_MI}
  I(H,C_\bx) \triangleq \log \det \left( I + H C_\bx H^\dagger \right) \,.
\eeq
We may interpret $I(H,C_\bx)$ as the maximal mutual information that can be attained using an input covariance matrix $C_\bx$, which is achievable by a Gaussian input $\bx$.

In order to achieve a rate approaching this mutual information, optimal codes of long block length are needed. However, as pointed out in the introduction, we take an approach which decouples the signal-processing aspects from these of coding. We thus omit the time index throughout the paper; for example, when referring to an input vector $\bx$, we mean the input at any time instant within the coding block. In a practical setting using encoder/decoder pairs of some given quality, one may easily bound the error probability of the scheme using the parameters of the codes.

\begin{figure}[t]
        \psfrag{&E1}{Codebook 1}
        \psfrag{&E2}{\hspace{-.25cm} Codebook $N_t$}
        \psfrag{&V}{$C_\bx^{1/2} V$}
        \psfrag{&b1}{$b_1$}
        \psfrag{&b2}{$b_{N_t}$}
        \psfrag{&tX1}{$\tilde x_1$}
        \psfrag{&tX2}{$\tilde x_{N_t}$}
        \psfrag{&X1}{$x_1$}
        \psfrag{&X2}{$x_{N_t}$}
        \psfrag{&info}{Info.}
        \psfrag{&bits}{bits}
        \psfrag{&split}{Splitter}
        \centering
        \epsfig{file = 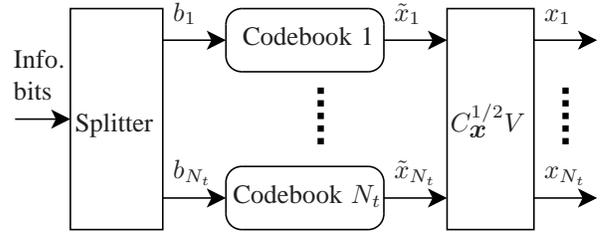, scale = .42}
        \caption{Multiple-codebook transmitter with linear precoding}
        \label{fig:TX}
\end{figure}

When coding over a domain different than the input domain (e.g., time or space), one may start with a virtual input vector $\tbx$, related to the physical input by the linear transformation: \footnote{The square root of a Hermitian positive-definite matrix $A$, denoted by $A^{1/2}$, is defined as the matrix satisfying: $A = A^{1/2} \left( A^{1/2} \right)^\dagger$.}
\beq{MIMO_TX}
    \bx = C_\bx^{1/2} V  \tbx \,.
\eeq
We form the vector $\tbx$, in turn, by taking one symbol from each of $N_t$ parallel codebooks, of equal power $1/N_t$. The matrix $V$ is a unitary linear precoder. See \figref{fig:TX}.

Recalling the GTD (see \secref{sec:GTD}), one may suggest to choose $V$ by applying a unitary triangularization to
\beq{F}
    F \Ddef H C_\bx^{1/2} \,.
\eeq
After the receiver applies the transformation $U^\dagger$, it is left with an equivalent triangular channel $T$, over which it may decode the codebooks using SIC. Unfortunately, while this ``conserves'' the determinant of $H C_\bx H^\dagger$, it fails to do so when the identity matrix is added as in the mutual information $I(H,C_\bx)$ \eqref{MIMO_MI}. Thus, this is optimal in the high SNR limit only, and an MMSE variation is needed in general, as next described.

We start by applying a unitary triangularization (as in \secref{sec:GTD}) to an augmented matrix:
\beq{augmented}
    \left( \begin{array}{c}
                F \\
                I_{N_t}
           \end{array}
    \right)
    \Ddef G  = U T V^\dagger,
\eeq
where $I_{N_t}$ is the identity matrix of dimensions $N_t$. Note that, by construction, $G$ is of proper dimensions, regardless of the dimensions and rank of the channel matrix $H$. That is, it has dimensions $(N_t+N_r) \times N_t$ and is \emph{full-rank}. The square matrices $U$ and $V$ have dimensions $N_t+N_r$ and $N_t$, respectively.
This allows to decompose the total rate into terms associated with the diagonal values of the matrix $T$, as follows:
\begin{align}
\label{eq:by_def}
    I(H,C_\bx) &= \log \det \left( I_{N_r} + F F^\dagger  \right) \\
    \label{eq:I_Sylvester}
    &= \log \det \left( I_{N_t} + F^\dagger F  \right) \\
\label{eq:F_to_G}
    & = \log \det \left( G^\dagger G \right) \\
\label{eq:G_to_sum}
    & = \sum_{j=1}^{N_t} \log (T_{j,j})^2 = \sum_{j=1}^{N_t} R_j \,,
\end{align}
where \eqref{eq:by_def} follows by the definitions \eqref{MIMO_MI}  and \eqref{F}, \eqref{eq:I_Sylvester} is justified by Sylvester's determinant Theorem (see e.g. \cite{TeletarMIMO}), \eqref{eq:F_to_G} is a direct application of the definition \eqref{augmented},
and in \eqref{eq:G_to_sum} we define $R_j \triangleq \log \left( T_{j,j} \right)^2$.
Using the matrices obtained by this decomposition, the following scheme communicates scalar codebooks of rates $\{R_j\}$.

\begin{figure}[t]
       \psfrag{&x1}{$\hat{\tilde x}_{N_t}$}
        \psfrag{&x2}{$\hat{\tilde x}_{N_t-1}$}
        \psfrag{&x3}{$\hat{\tilde x}_2$}
        \psfrag{&x4}{$\hat{\tilde x}_1$}
        \psfrag{&y1}{$y_{N_r}$}
        \psfrag{&y2}{$y_{N_r-1}$}
        \psfrag{&y3}{$y_1$}
        \psfrag{&ty1}{$\ty_{N_t}$}
        \psfrag{&ty2}{$\ty_{N_t-1}$}
        \psfrag{&ty3}{$\ty_1$}
        \psfrag{&y'1}{$y'_{N_t}$}
        \psfrag{&y'2}{$y'_{N_t-1}$}
        \psfrag{&y'3}{$y'_1$}
        \psfrag{&U}{$U^\dagger$}
        \psfrag{&-}{$-$}
        \psfrag{&D1}{Dec.~$N_t$}
        \psfrag{&D2}{Dec.~$N_t-1$}
        \psfrag{&D3}{Dec.~1}
        \psfrag{&G0}{$\tilde T_{N_t-1,N_t}$}
        \psfrag{&G1}{$\tilde T_{1,N_t}$}
        \psfrag{&G2}{$\tilde T_{1,N_t-1}$}
        \psfrag{&G3}{$\tilde T_{1,2}$}
         \centering
        \epsfig{file = 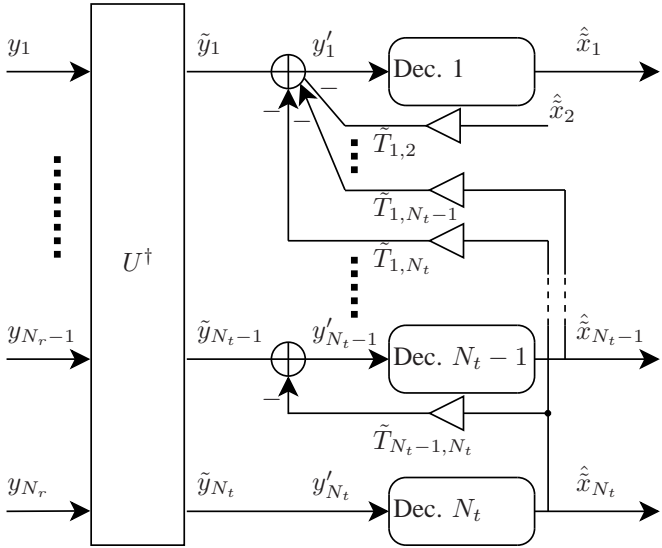, scale = .45}
        \caption{SIC-based receiver.}
        \label{fig:RX}
\end{figure}

The transmitted signal is formed using \eqref{MIMO_TX}.
At the receiver, we use a matrix $W$, consisting of the upper-left $N_r\times N_t$ block of $U$: $\tilde\by = W^\dagger \by$. This results in an equivalent channel:
\beq{equivalent}
    \tilde\by = W^\dagger (F V \tilde \bx + \bz) = W^\dagger F V \tilde \bx + W^\dagger \bz \Ddef \tilde T \tbx + \tilde\bz.
\eeq
Note that since $W$ is not unitary, the statistics of $\tbz \triangleq W^\dagger \bz$ differ from those of $\bz$. We denote the covariance matrix of the equivalent noise by $C_{\tilde\bz} = WW^\dagger$. Finally, SIC is performed, i.e., the codebooks are decoded from last ($j=N_t$) to first ($j=1$), where each codebook is recovered from:
\begin{align}
\label{eq:successive}
    y'_j = \tilde y_j - \sum_{l=j+1}^{N_t} \tilde T_{j,l} \hat {\tilde x}_l \,,
\end{align}
where $\hat {\tilde x}_l$ is the decoded symbol from the $l$-th codebook; see \figref{fig:RX}. Assuming correct decoding of ``past'' symbols, i.e. $\hat {\tilde x}_l = \tilde x_l$ for all $l>j$, the scalar channel for decoding of the $j$-th codebook
is given by:
\beq{scalar_channel}
    y'_j = \tT_{j,j} \tilde x_j + \sum_{l=1}^{j-1}  \tilde T_{j,l} \tilde x_l + \tilde z_j \,.
\eeq
Since $\tilde T$ is not triangular, the second term in this scalar channel (resulting from elements below the diagonal of $\tT$) acts as interference. The signal-to-intereference-and-noise ratio (SINR) is given by:
\begin{align}
\label{eq:SINR}
    S_j = \var{\tilde x_j \middle| \tilde {\bf{y}}, \tilde x_{j+1}^{N_t}}
    = \frac{(\tilde T_{j,j})^2}{C_{\tilde\bz;j,j} + \sum_{l=1}^{j-1} ( \tilde T_{j,l})^2} \,,
\end{align}
where $C_{\tilde\bz;i,j}$ denotes the $(i,j)$ entry of $C_{\tilde\bz}$.

The following, which is equivalent to Lemma III.3 in \cite{UCD}, shows optimality of the scheme.

\vspace{3mm}
\begin{prop}
\label{prop_MMSE}
    For any channel $H$ and input covariance matrix $C_\bx$, the SINRs $S_j$ \eqref{eq:SINR} of the transmission scheme above satisfy:
    \beq{SINRs_rates}
        \log (1 + S_j) = R_j  \,, \qquad \forall j=1,\ldots,N_t \,,
    \eeq
    where the rates $R_j$ are given by \eqref{eq:G_to_sum}.
\vspace{2mm}
\end{prop}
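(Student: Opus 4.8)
The goal is to show that the SINR $S_j$ of the $j$-th successively-decoded codebook satisfies $\log(1+S_j) = R_j = \log(T_{j,j})^2$, where $T_{j,j}$ is the $j$-th diagonal element of the triangular factor obtained from the \emph{augmented} matrix $G$ in \eqref{augmented}. The essential structural fact I would exploit is the difference between the true triangular matrix $T$ (from decomposing $G = UTV^\dagger$, which includes the appended identity block) and the equivalent channel $\tT = W^\dagger F V$ seen at the receiver, where $W$ is only the upper $N_r \times N_t$ block of $U$. The plan is to relate the SINR expression \eqref{eq:SINR}, written in terms of $\tT$ and the colored noise covariance $C_{\tbz} = WW^\dagger$, back to the clean diagonal $T_{j,j}$ coming from $G$.

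First I would write out the block structure of the decomposition $G = UTV^\dagger$ explicitly. Partitioning $U$ into its top $N_r \times N_t$ block $W$ and a bottom $N_t \times N_t$ block, call it $W'$, the identity $G = UTV^\dagger$ gives two matrix equations: $F = W T V^\dagger$ (the top block) and $I_{N_t} = W' T V^\dagger$ (the bottom block). From the first, $\tT = W^\dagger F V = W^\dagger W T$; from the second, $W' T V^\dagger = I$ lets me express $V^\dagger = T^{-1} (W')^{-1}$, hence $(W')^{-1} = TV^\dagger$ and $W' = V T^{-1}$ (using that $[T]$ is invertible under the full-rank/proper-dimensions assumption). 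The orthonormality of the columns of $U$ gives $W^\dagger W + (W')^\dagger W' = I_{N_t}$, so $W^\dagger W = I - (W')^\dagger W' = I - (TV^\dagger)^\dagger (TV^\dagger) = I - V T^{-\dagger} T^{-1} V^\dagger$. The key computation is then to substitute these into the SINR formula and collect terms.

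The main obstacle — and the part requiring genuine care — is the algebra tying together the diagonal entries. The SINR in \eqref{eq:SINR} is $S_j = (\tT_{j,j})^2 / \big(C_{\tbz;j,j} + \sum_{l<j}(\tT_{j,l})^2\big)$, so I must show this denominator plus numerator equals $(T_{j,j})^2$ times the denominator, i.e. $1 + S_j = (T_{j,j})^2$ requires $C_{\tbz;j,j} + \sum_{l=1}^{j}(\tT_{j,l})^2 = (T_{j,j})^2 \big(C_{\tbz;j,j} + \sum_{l<j}(\tT_{j,l})^2\big)$. Rather than brute force, I would interpret $S_j$ as an MMSE quantity: the expression $\var{\tx_j \mid \tby, \tx_{j+1}^{N_t}}$ in \eqref{eq:SINR} is, up to the unit-power normalization, the inverse of the estimation-error variance when estimating $\tx_j$ from the received signal after cancelling future symbols. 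The cleanest route is to recognize that the augmented-matrix construction is precisely the standard MMSE-GDFE/UCD trick: the appended $I_{N_t}$ plays the role of the prior on $\tbx$, so that decomposing $G$ rather than $F$ builds the MMSE estimator directly into the unitary $U$. Concretely, $G^\dagger G = I + F^\dagger F$, and the LDL/QR structure of $G$ means $(T_{j,j})^2$ is the $j$-th diagonal pivot of $I + F^\dagger F$ under the ordering fixed by $V$, which is exactly the MMSE-SINR-plus-one for estimating $\tx_j$ given future symbols. I would therefore aim to prove the identity $1 + S_j = (T_{j,j})^2$ by identifying both sides with the $j$-th conditional-MMSE pivot.

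Since the statement is asserted to be equivalent to Lemma~III.3 of~\cite{UCD}, the honest shortest proof is to verify that the construction here (augmented matrix, $W$ as the top block of $U$, SIC ordering) matches the UCD setup of~\cite{UCD} and then invoke that lemma; the value added is the explicit block-algebra verification above, confirming $W^\dagger W = I - VT^{-\dagger}T^{-1}V^\dagger$ and carrying out the collapse of the denominator, which is where all the real work sits.
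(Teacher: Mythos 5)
Your fallback route---checking that the construction matches \cite{UCD} and invoking Lemma~III.3 there---is in fact all the paper itself does: Proposition~\ref{prop_MMSE} is stated as ``equivalent to Lemma~III.3 in \cite{UCD}'' with no proof given. So the citation part of your proposal coincides with the paper, and your direct block-algebra verification is the genuinely added content. Its skeleton is right: writing the first $N_t$ columns of $U$ as $W$ stacked on $W'$, you correctly obtain $F = W[T]V^\dagger$, $I_{N_t} = W'[T]V^\dagger$, hence $\tT = W^\dagger F V = W^\dagger W\,[T]$ and $W' = V[T]^{-1}$, and column orthonormality gives $W^\dagger W = I - (W')^\dagger W'$. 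But your evaluation of this last expression is wrong, and the error sits exactly in the identity you declare to be the key to the computation. With $W' = V[T]^{-1}$,
\begin{equation*}
(W')^\dagger W' = [T]^{-\dagger}V^\dagger V [T]^{-1} = [T]^{-\dagger}[T]^{-1}\,,
\qquad\text{so}\qquad
W^\dagger W = I - [T]^{-\dagger}[T]^{-1}\,,
\end{equation*}
with no $V$ factors. Your formula $I - V[T]^{-\dagger}[T]^{-1}V^\dagger$ comes from substituting $(W')^{-1} = [T]V^\dagger$ where $W'$ belongs and then misevaluating the product; it equals neither $I-(W')^\dagger W'$ nor $I - W'(W')^\dagger$. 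If carried forward, $\tT = W^\dagger W\,[T]$ loses all triangular structure and the promised ``collapse of the denominator'' does not happen. (A side remark: the noise covariance is $C_{\tbz}=W^\dagger W$, not $WW^\dagger$---a typo you inherited from the paper; it is the diagonal of $W^\dagger W$ that enters \eqref{eq:SINR}.)

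The second gap is that you never carry out that collapse, which you yourself identify as ``where all the real work sits,'' so the proposal is incomplete even granting the identity. With the corrected identity the computation does close, and cleanly: $\tT = \left(I - [T]^{-\dagger}[T]^{-1}\right)[T] = [T] - [T]^{-\dagger}$, so $\tT_{j,j} = T_{j,j} - T_{j,j}^{-1}$, the strictly-lower entries are $\tT_{j,l} = -\overline{\left([T]^{-1}\right)_{l,j}}$ for $l<j$, and $\left(W^\dagger W\right)_{j,j} = 1 - \sum_{k\le j} \left|\left([T]^{-1}\right)_{k,j}\right|^2$ since $[T]^{-1}$ is upper triangular. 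Hence the interference-plus-noise term in \eqref{eq:SINR} is
\begin{equation*}
\left(W^\dagger W\right)_{j,j} + \sum_{l<j}\left|\tT_{j,l}\right|^2
= 1 - \left|\left([T]^{-1}\right)_{j,j}\right|^2 = 1 - T_{j,j}^{-2}\,,
\end{equation*}
giving $S_j = \left(T_{j,j}-T_{j,j}^{-1}\right)^2 / \left(1 - T_{j,j}^{-2}\right) = T_{j,j}^2 - 1$, i.e., $\log(1+S_j)=R_j$. Finally, your MMSE/Cholesky-pivot interpretation ($T_{j,j}^2$ as the $j$-th pivot of $V^\dagger(I+F^\dagger F)V$) correctly explains \emph{why} the result holds, but as written it is an assertion rather than a proof: the whole content of the proposition is that the paper's specific receiver ($W^\dagger$ followed by SIC, residual interference treated as noise) attains those MMSE pivots, and that is precisely what the block computation above must establish.
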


This completes the recipe for a digital transmission scheme which achieves $I(H,C_\bx)$: For a given input covariance matrix $C_\bx$, choose the individual codebook rates to approach $\{R_j\}$, the sum of which equals the mutual information afforded by the MIMO channel \eqref{MIMO_MI}.
By \propref{prop_MMSE}, the successive decoding procedure will succeed with arbitrarily low probability of error for these rates (asymptotically for high-dimensional optimal scalar AWGN codes). Taking $C_\bx$ be the covariance matrix maximizing \eqref{MIMO_capacity}, capacity can be achieved.

The above exposition proves the optimality of the ``scalar coding''  approach~-- the combination of scalar AWGN codebooks, linear processing, and SIC. This approach offers reduced complexity and easy-to-analyze performance when the channel is known at both ends (``closed loop'').
Indeed, special cases of this approach have been suggested and used. In particular, using the SVD results in a \emph{diagonal} equivalent channel matrix $T$, establishing parallel virtual AWGN channels (no SIC is needed), see \cite{TeletarMIMO}. Other schemes, such as generalized decision feedback equalization (GDFE) and Vertical Bell-Laboratories Space-Time coding (V-BLAST), see \cite{CioffiForneyGDFE,Wolniansky_V-BLAST}, are based on the QR decomposition. These do not require linear precoding, i.e., $V=I$. The UCD \cite{UCD} uses both a linear precoder and SIC, in order to achieve $T$ with diagonal elements that are all equal.

\begin{remark}[Number of codebooks]
  If desired, one may work with any number of codebooks above $N_t$, as stated in \cite{UCD}. To see that, add ``virtual transmit antennas'' with corresponding zero channel gains. The capacity remains unchanged, and the optimal channel input covariance matrix will not allocate power to these ``antennas''. The number of codebooks is equal to the number of antennas, including the additional virtual ones.
\end{remark}

All of these schemes have significant advantages over direct capacity-achieving implementation for MIMO channels. Such high-complexity schemes,
e.g., using bit-interleaved coded modulation (BICM) in conjunction with sphere detection, essentially require the  same resources as if working in an ``open loop'' mode.
Thus, the complexity involved is similar to that required for approaching the isotropic mutual information of the channel, when only the receiver knows the channel. We note that this advantage comes at the price of suffering from \emph{error propagation} between the codebooks. This effect has been analyzed and simulated in many works, see e.g., \cite{JiangVBLASTAnalysis,VBLAST_LDPC}.

We conclude this section by pointing out a simple extension to a unitary transformation which induces a \emph{block-traingular} matrix rather than a strictly triangular one. That is, if the matrix $R$ in \eqref{augmented} is of the block generalized upper-triangular form \eqref{eq:BlockTriangular},
where the block $T_{i;k,l}$ is of dimensions $N_k \times N_l$, such that $\sum_{k=1}^K N_k = N_t$.
In that case, we employ $K \leq N_t$ codes in parallel, each over an equivalent $N_k \times N_k$ MIMO channel, achieved by ``block-SIC'':
\beq{block_channel}
    \by'_j = \sum_{l=1}^j  \tilde T_{j,l} \tbx_l + \tilde \bz_j \,, \qquad j=1,\ldots,K \,,
\eeq
where $\tilde T_{j,l}$ is of dimensions $N_j \times N_l$. Seen as Gaussian MIMO channels (i.e., seeing residual interference as noise) we achieve, as an extension to Proposition~\ref{prop_MMSE}, a rate
\beq{eq:block_rates}
    R_j = \log \det \left( T_{j,j} (T_{j,j})^\dagger \right)
\eeq
over each such block channel.

\subsection{Optimal Two-User Scheme}
\label{subs:Mutlicasting}

\begin{figure}[t]
        \psfrag{&T1}{Rx 1}
        \psfrag{&T2}{Rx 2}
        \psfrag{&R}{Tx}
        \psfrag{&H1}{$H_1$}
        \psfrag{&H2}{$H_2$}
        \centering
        \epsfig{file = 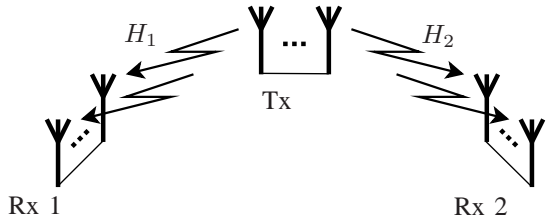, scale = .6}
    \caption{Two-user MIMO BC channel. Even though both users may have similar channel quality, the actual channel matrices $H_1$ and $H_2$ differ.}
    \label{fig:BC_Channel}
\end{figure}

We now derive an optimal practical communication scheme for two-user multicasting.

The two-user Gaussian MIMO broadcast (BC) channel has one transmit and two receive nodes, where each received signal is related to the transmitted signal through (see also \figref{fig:BC_Channel}):
\beq{MIMO-BC}
  \by_i = H_i \bx + \bz_i \,, \qquad i=1,2 \,,
\eeq
where $\bx$ is the channel input of dimensions $N_t \times 1$ subject to an average power constraint $P$;\footnote{Again, alernatively, one can consider an input
covariance constraint $C_\bx \triangleq E\left[ \bx \bx^\dagger
\right] \preceq C$.}
$\by_i$ is the channel output vector of decoder $i$ ($i=1,2$) of dimensions
$N_r^{(i)} \times 1$; $H_i$ is the channel matrix to user $i$ of dimensions $N_r^{(i)} \times N_t$ and $\bz_i$ is an additive circularly-symmetric Gaussian noise vector of dimensions $N_r^{(i)} \times 1$, where again, without loss of generality, we assume that the noise elements are mutually-independent, identically-distributed with unit variance.

This channel has received much attention over the past decade. Unlike the single-input single-output (SISO) case, the Gaussian MIMO BC channel is not degraded. Nevertheless, capacity regions were established for some scenarios, such as private-messages only, and for a common message with a single private message, and bounds were derived for others,
see~\cite{YCdpc,CaireShamai03,VJG03,WSS06,WSS_CommonMessage} and references therein.

We focus our attention on the multicast (common-message) problem, the capacity of which is long known to equal the (worst-case) capacity of the
compound channel~\cite{BlackwellBreimanThomasian59,Dobrushin59,Wolfowitz60}, with the compound parameter being the channel matrix index:
\beq{MIMO_BC_capacity}
    C(H_1,H_2,P) = \max_{C_\bx} \min_{i=1,2} I(H_i,C_\bx) \,,
\eeq
where maximization is over all channel input covariance matrices $C_\bx \geq 0$, subject to the power constraint $\trace{C_\bx}\leq P$ and the MIMO mutual information $I(H,C_\bx)$ was defined in \eqref{MIMO_MI}.

We wish to use a scalar-coding approach, as applied to the point-to-point setting in \secref{subs:SingleUserSIC}. Indeed, the private-message MIMO BC capacity can be achieved by scalar-coding  (in this case dirty-paper coding) techniques; see, e.g., \cite{Tejera05}.
In the presence of a common message, however, to our knowledge, no scalar capacity-approaching coding solutions are known. QR-based schemes fail, since requiring the individual streams to be simultaneously  decodable  at all the receivers implies that the rate \emph{per stream} is governed by the smallest of the corresponding diagonal elements (in the resulting two matrices), (potentially) inflicting an unbounded rate penalty. Adapting SVD to this scenario has an additional problem: The decomposition requires multiplying by a channel-dependent (unitary) matrix at the transmitter, which prevents from using this decomposition for more than one channel simultaneously.\footnote{Indeed, the GSVD allows to use a single transformation for two different channels at one of the ends, but for each virtual parallel channel it yields a different gain for each user, thus not solving the inefficiency mentioned above. In fact, using GSVD may result in \emph{worse} performance than using a QR-based receiver without any transformation at the transmitter since the spread of the diagonal ratio is maximal, see Remark~\ref{rem:GSVD}.}
As a result of these difficulties, other techniques were proposed, which are suboptimal in general, see, e.g., \cite{Gohary03,LopezPhD}.

In this section, we present an optimal successive-decoding (low-complexity) scheme for a two-user common-message Gaussian MIMO BC channel. Specifically, the proposed scheme is based upon  SIC and good \emph{scalar} AWGN codes, in conjunction with
the following special case of the  decomposition in Theorem~\ref{thm:GGTD}.

\vspace{3mm}
\begin{corol}
\label{cor:JET}
  Let $A_1$ and $A_2$ be two matrices of proper dimensions $m_1 \times n$ and $m_2 \times n$, respectively, satisfying
\beq{corol_cond}
    \det\left(A_1 A_1^\dagger\right) \geq \det\left(A_2 A_2^\dagger\right).
\eeq Then there exists a joint triangularization \eqref{eq:JointTriang.} where
\[
  T_{1;j,j} \geq T_{2;j,j} \,, \qquad \forall j=1, \ldots, n \,.
\]
\vspace{2mm}
\end{corol}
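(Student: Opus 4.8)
The plan is to recognize this as the special case of \thmref{thm:GGTD} in which the prescribed diagonal ratios are all equal. Concretely, I would take as target the constant ratio vector $\br = (g,\ldots,g)$ whose entries all equal the geometric mean of the generalized singular values,
\[
    g \triangleq \left( \prod_{j=1}^n \mu_j(A_1,A_2) \right)^{1/n}.
\]
If I can verify two things --- first that $g \geq 1$, and second that $\bmu(A_1,A_2) \succeq \br$ --- then \thmref{thm:GGTD} immediately yields a joint triangularization with $T_{1;j,j}/T_{2;j,j} = g \geq 1$ for every $j$, which is exactly the assertion $T_{1;j,j} \geq T_{2;j,j}$.

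The second requirement comes for free: by \remref{rem:FixedRatio}, every positive vector majorizes the constant vector built from its own geometric mean, so $\bmu(A_1,A_2) \succeq \br$ holds automatically. (The products match by construction, and the ordered partial products of $\bmu$ dominate $g^k$ because the average of the $k$ largest log-entries is at least the overall average.) Hence the only real content is the scalar inequality $g \geq 1$, equivalently $\prod_j \mu_j(A_1,A_2) \geq 1$.

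To establish this I would express the product of the GSVs through a determinant. From \defnref{def:GSV} the squared GSVs are the generalized eigenvalues of the pencil $(A_1^\dagger A_1,\, A_2^\dagger A_2)$, whose product equals $\det(A_1^\dagger A_1)/\det(A_2^\dagger A_2)$; equivalently, reducing to the square parts $[R_i]$ of the QR factors $A_i = Q_i R_i$ exactly as in the proof of \thmref{thm:GGTD} and invoking \remref{rem:GSV_coincides_with_SV}, one gets $\prod_j \mu_j = \bigl|\det[R_1]\bigr| / \bigl|\det[R_2]\bigr|$ with $\bigl|\det[R_i]\bigr|^2 = \det(A_i^\dagger A_i)$. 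Either route turns the hypothesis \eqref{corol_cond} into $\prod_j \mu_j^2 \geq 1$, and therefore $g \geq 1$.

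The step I expect to demand the most care is precisely this last identification of $\prod_j \mu_j$ with the determinant ratio, and in particular reconciling the determinants $A_i A_i^\dagger$ written in \eqref{corol_cond} with the $n \times n$ Gram determinants $\det(A_i^\dagger A_i)$ that actually control the GSV product: for square matrices the two coincide, while in the rectangular case the reduction to the square parts $[R_i]$ handles the discrepancy cleanly. Once $g \geq 1$ is secured there is nothing further to do, as the corollary is then a direct instantiation of \thmref{thm:GGTD}.
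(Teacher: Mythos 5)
Your proposal is correct and follows essentially the same route as the paper's own proof: set the target ratio vector to the constant vector whose entries equal the geometric mean $\bar\mu$ of $\bmu(A_1,A_2)$, note that $\bmu \succeq \br$ automatically (Remark~\ref{rem:FixedRatio}), identify the hypothesis \eqref{corol_cond} with $\prod_j \mu_j \geq 1$ so that $\bar\mu \geq 1$, and invoke Theorem~\ref{thm:GGTD}. Your write-up simply fills in details the paper asserts without proof, namely the determinant identity behind the equivalence of \eqref{corol_cond} with $\prod_j \mu_j \geq 1$ and the Gram-determinant reading of the hypothesis in the rectangular case.
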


\begin{proof}
  An equivalent condition to \eqref{corol_cond} is that the product of the entries of $\bmu=\bmu(A_1,A_2)$ is at least one. Let $\bar{\mu} \geq 1$ be the geometrical mean of $\bmu$, and let the vector $\br$ be the same size as $\bmu$, with all the elements equal to $\bar{\mu}$. By construction, $\bmu \succeq \br$, thus by Theorem~\ref{thm:GGTD} there exists a joint triangularization with this ratio. Consequently, there exists a decomposition such that for all elements $T_{1;j,j} = \bar \mu T_{2;j,j} \geq T_{2;j,j}$.
\end{proof}
\vspace{2mm}
\begin{remark}[Admissible diagonal ratios]
  The proof suggests that the diagonal ratios vector be made uniform. This is always possible, but is not the only choice (unless $I(H_1,C_\bx)=I(H_2,C_\bx)$).
\end{remark}

The results above defines a communication scheme in the following way. For the channels $H_1$ and $H_2$, let $C_{\bx}$ be a capacity-achieving input covariance matrix, and assume without loss of generality that $I(H_1,C_\bx)\geq I(H_2,C_\bx)$. Define the augmented matrices $G_1$ and $G_2$ as in \eqref{augmented}. By Corollary~\ref{cor:JET}, there exists a joint triangularization \eqref{eq:JointTriang.} such that each diagonal element of $[T_1]$ is at least equal to the corresponding element of $[T_2]$. On account of \eqref{eq:by_def}-\eqref{eq:G_to_sum} we have that:
\[
  I(H_2, C_\bx) = \sum_{j=1}^{N_t} \log (T_{2;j,j})^2 \triangleq  \sum_{j=1}^{N_t} R_j \,.
\]
This rate can be approached using SIC at each receiver as in the point-to-point case of \secref{subs:SingleUserSIC}. Specifically, $\tbx$ is formed from $N_t$ codebooks of rates $\{R_j\}$ and power $1/N_t$ each. The transmitted vector is given by the linear precoding \eqref{MIMO_TX} and receiver $i$ performs the linear transformation \eqref{equivalent} and SIC \eqref{eq:successive} (substituting $U_i$ and $T_i$ for $U$ and $T$, respectively). Now Proposition~\ref{prop_MMSE} guarantees correct decoding of all codebooks for receiver 2. Since in receiver 1 each SINR can only be greater, it will be able to decode as well.

\begin{remark}[Private messages]
  If, in addition to the common message intended to both users, there are private messages (messages intended for individual users), superposition may be used. That is, part of the transmit power is dedicated to the private messages. Then, for the purpose of the common message, the transmission for the private messages is considered as noise. This approach was shown in \cite{WSS_CommonMessage} to be capacity-achieving in the presence of a single private message, and under some conditions on the rate and power~-- also in the presence of two private messages (even when these conditions do not hold, superposition gives the best known performance). The scheme presented in this section may be used for the common-message layer of these superposition schemes as well. Interestingly, in that case we would have interference cancellation both at the encoder (dirty-paper coding of the private messages) and at the decoders (SIC of the common message).
\end{remark}

\section{HDA Transmission for Source Multicasting}
\label{sec:JSCC}

In this section we turn from the purely digital setting to a joint source-channel coding (JSCC) problem, where we wish to multicast an analog source to two destinations, where each destination should enjoy reconstruction quality according to the capacity offered by its channel.

The transmission of a source over a BC channel is one of the main applications of JSCC. In this setting, JSCC may be greatly superior to transmission based upon source-channel separation. In a classical example, a white Gaussian source needs to be transmitted over a two-user AWGN BC channel, with one channel use per source sample, under mean-squared error (MSE) distortion. Analog transmission achieves the optimal performance for each user as if the other user did not exist\cite{Goblick65}. In contrast, the separation-based scheme (concatenation of successive-refinment quantization and broadcast channel coding) yields a tradeoff, where if we wish to be optimal for the user with worse signal-to-noise ratio (SNR), then both users have the same distortion, while optimality for the user with better SNR means that the distortion for the other user is trivial (equals the source variance). See, e.g., \cite[App.~A]{ChenWornell98}.

We focus on the transmission of  an i.i.d. circularly-symmetric Gaussian source $S$ to two destinations over a MIMO BC channel \eqref{MIMO-BC}, with one channel use per source sample. We measure the quality of the reproductions $\hat S_i$ using the MSE distortion measure. Thus, we wish to maximize the tradeoff between the signal-to-distortion ratios (SDRs), defined as
\beq{SDRs}
    \SDR_i \Ddef \frac{\var{S}}{\var{\hat S_i - S}} \,, \quad i=1,2 \,.
\eeq
The achievable SDR region $\mS(H_1,H_2)$ is defined as the closure of all pairs which can be achieved by some encoding-decoding scheme.

This general problem of describing $\mS(H_1,H_2)$ has not received much attention. Nevertheless, in the special cases of diagonal or Toeplitz channel matrices, it reduces to the better known problem of transmission over a colored and/or bandwidth-mismatched Gaussian BC channel, for which different schemes which outperform the separation approach have been presented, see e.g. \cite{ChenWornell98,MittalPhamdo,PrabhakaranPuriRamchandran,AnalogMatching,TaherzadehKhandaniNice}. However, even for these cases optimality claims are not abundant. In \cite{AnalogMatching}, Kochman and Zamir show asymptotic optimality for high SNR, where the channels have the same bandwidth as the source, and one user enjoys a better channel than the other at all frequencies. Taherzadeh and Khandani \cite{TaherzadehKhandaniNice} show that optimality in the slope sense (weaker than high-SNR asymptotic optimality) is possible for white channels where their bandwidths (BW) are integer multiples of the source BW. A similar slope argument applies to the general MIMO case as well. 

A simple outer bound on the achievable SDR region is given by the following.
\vspace{3mm}
\begin{prop}
\label{prop:bound}
    $\mS (H_1,H_2) \subseteq \bar{\mS}(H_1,H_2)$, where the bounding region $\bar{\mS}(H_1,H_2)$ is given by:
    \[
        \bigcup_{C_\bx} \big\{(\SDR_1,\SDR_2): \log(\SDR_i) \leq  I(H_i,C_\bx)\big\} \,,
    \]
    where the union is over all matrices $C_\bx \geq 0$ such that $\trace{C_\bx}\leq P$, and where the MIMO mutual information $I(H,C_\bx)$ was defined in \eqref{MIMO_MI}.
    \vspace{2mm}
\end{prop}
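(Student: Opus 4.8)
The plan is to prove this outer bound by a source--channel separation argument, whose structural heart is that in a broadcast setting both users are served by a single channel input, and therefore by a single input covariance matrix. Fix any scheme operating over a block of $n$ source samples and $n$ channel uses, producing channel inputs $\bx_1,\ldots,\bx_n$ and reconstructions $\hat S_i^n$ at user $i$. For each $i$ the Markov chain $S^n \to \bx^n \to \by_i^n \to \hat S_i^n$ holds, so the data-processing inequality gives $I(S^n;\hat S_i^n) \le I(\bx^n;\by_i^n)$.

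First I would lower-bound the left side. Since the source is i.i.d.\ circularly-symmetric Gaussian with rate--distortion function $R_S(D)=\log(\var{S}/D)$ under the MSE measure, the standard single-letter converse together with convexity of $R_S$ yields $\tfrac1n I(S^n;\hat S_i^n) \ge \log(\var{S}/D_i) = \log \SDR_i$, where $D_i \Ddef \tfrac1n\sum_t \var{\hat S_{i,t}-S_t}$ is the average per-sample distortion, so that $\SDR_i = \var{S}/D_i$. Next I would upper-bound the right side. Because the channel is memoryless given the input and differential entropy is subadditive, $I(\bx^n;\by_i^n) \le \sum_t I(\bx_t;\by_{i,t})$; the maximum-entropy property of the Gaussian bounds each term by $\log\det\!\left(I + H_i C_t H_i^\dagger\right) = I(H_i,C_t)$, with $C_t \Ddef E[\bx_t\bx_t^\dagger]$. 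Concavity of $\log\det$ then gives $\tfrac1n\sum_t I(H_i,C_t) \le I(H_i,\bar C_\bx)$ for the averaged covariance $\bar C_\bx \Ddef \tfrac1n\sum_t C_t$.

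Chaining these three inequalities yields $\log \SDR_i \le I(H_i,\bar C_\bx)$ for \emph{both} $i=1,2$, using the \emph{same} matrix $\bar C_\bx$, which obeys $\trace{\bar C_\bx}\le P$ by the power constraint; since this holds for every finite-$n$ scheme and the bounding region is closed (the index set of covariances with $\trace{C_\bx}\le P$ is compact), the closure $\mS(H_1,H_2)$ is contained in the claimed union as well. The argument is largely routine; the only points requiring care are the constant in the rate--distortion function of the complex source (so that $R_S(D)=\log(\var{S}/D)$ matches $\log\SDR_i$ rather than half of it), and the observation that a single averaged covariance simultaneously bounds both users---this is precisely why the bounding region is a union over one $C_\bx$ rather than a Cartesian product of two independent point-to-point regions, and it is the main thing to get right.
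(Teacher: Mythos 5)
Your proof is correct and is exactly the argument the paper intends: the paper's proof is the single remark that it ``follows that of the classical source-channel converse, taking into account that both users share the same channel input,'' and your chain of data-processing, rate--distortion converse, memoryless-channel single-letterization, and covariance averaging via concavity of $\log\det$ is precisely that classical argument spelled out, with the shared averaged covariance $\bar C_\bx$ capturing the one structural point the paper highlights.
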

The proof follows that of the classical source-channel converse \cite{ShannonRDF}, taking into account that both users share the same channel input.

In \secref{sub:achievability} we find sufficient conditions for achieving points on the boundary of this region. Then, in \secref{sub:rank2} we present, for the case of two transmit antennas, a simple sufficient condition such that all of the region $\bar{\mS}(H_1,H_2)$ can be achieved. Unlike previous work, this proves strict, optimality, non-asymptotic in the channel SNR; it applies to some cases of color and bandwidth mismatch, although not to the white BW-expansion case.

\subsection{Optimality by HDA Transmission} \label{sub:achievability}

We give a constructive achievability proof, which combines a hybrid digital-analog (HDA) scheme by Mittal and Phamdo \cite{MittalPhamdo} with the joint triangularization approach; the optimum is achievable whenever the diagonal ratio vector can be shaped according to the needs of the HDA scheme. In order to understand the function of the HDA scheme, we need to consider the following related scenario. In a JSCC multicasting problem as above, the BC channel is SISO, i.e., $N_t=N_r=1$, in which case the channel matrices reduce to scalars $h_i$. However, in addition, the transmitter node may send some data to the users (identical for both) over a digital channel of rate $R_\digital$ bits per use of the BC channel. According to the aforementioned HDA approach, the source is first quantized according to the rate of the digital channel and the quantization error is then sent in an analog manner over the analog channel. Thus, the distortion is equal to that of the (analog) quantization error, and hence optimality (optimum distortion over each channel) is achieved simultaneously, as implied by the following proposition.

\vspace{3mm}
\begin{prop}
\label{prop:Mittal}
  In the setting above, the optimal performance is given by:
  \[
    \SDR_i = (1+ h_i^2 P) \, 2^{R_\digital} \,.
  \]
\vspace{2mm}
\end{prop}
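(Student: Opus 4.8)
The plan is to prove both halves of the optimality claim: that the HDA scheme \emph{achieves} the stated SDR pair, and that no scheme can exceed it. For achievability I would instantiate the Mittal--Phamdo construction as follows. Since $S$ is circularly-symmetric complex Gaussian, its rate--distortion function is $R(D)=\log\left(\var{S}/D\right)$, so a rate-$R_\digital$ optimal quantizer produces a reconstruction $\hs_q$ whose error $E \Ddef S-\hs_q$ has variance $D_q = \var{S}\, 2^{-R_\digital}$; taking the backward-test-channel representation, $E$ is (asymptotically) Gaussian and uncorrelated with $\hs_q$. The $R_\digital$ quantization bits are delivered error-free to both users over the digital channel, while the analog part transmits the scaled error $x=\alpha E$ with $\alpha^2 D_q = P$, so that the power constraint is met with equality.

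Next I would carry out the per-user reconstruction and compute its MSE. User $i$ observes $y_i = h_i \alpha E + z_i$ and forms $\hs_i = \hs_q + \hat E_i$, where $\hat E_i$ is the MMSE estimate of $E$ from $y_i$. Because $E$ is uncorrelated with both $\hs_q$ and $z_i$, the overall error telescopes to $S-\hs_i = E-\hat E_i$, whose variance is the scalar MMSE $\var{E \mid y_i} = D_q/\left(1+h_i^2\alpha^2 D_q\right) = D_q/\left(1+h_i^2 P\right)$. Substituting $D_q=\var{S}\,2^{-R_\digital}$ yields $\SDR_i = \var{S}/\var{E\mid y_i} = \left(1+h_i^2 P\right)2^{R_\digital}$, as claimed. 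For the converse I would appeal to the point-to-point source--channel separation bound of \cite{ShannonRDF}: the total information user $i$ can obtain about $S$ per sample is at most the sum of its analog-channel capacity $\log\left(1+h_i^2 P\right)$ and the digital rate $R_\digital$, so $\log\SDR_i = R(D_i) \leq \log\left(1+h_i^2 P\right)+R_\digital$, which the scheme attains with equality.

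The main obstacle is the achievability bookkeeping for the quantization error rather than the converse. The clean additive form $S-\hs_i = E-\hat E_i$ and the collapse of the receiver estimate to a scalar MMSE both hinge on $E$ being uncorrelated with $\hs_q$ and jointly Gaussian with the channel observation; guaranteeing this requires the backward-test-channel form of the rate-distortion-optimal code (or a dithered-lattice surrogate that makes the error exactly independent and Gaussian-like), together with a reliable, rate-$R_\digital$ delivery of the digital bits. Once these are in place the remaining computation is the elementary scalar-MMSE identity used above.
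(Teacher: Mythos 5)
Your proposal is correct and follows essentially the same route as the paper's proof: quantize $S$ at rate $R_\digital$, deliver the quantization bits error-free over the digital channel, transmit the quantization error in analog form over the Gaussian channel, and close with the classical source--channel separation converse. The only difference is that you unfold the analog step explicitly (power scaling plus the scalar MMSE computation giving error variance $D_q/(1+h_i^2 P)$), where the paper simply cites Goblick's result that analog transmission achieves $\SDR_{\analog,i} = 1 + h_i^2 P$.
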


\begin{proof}
    We use a vector quantizer which decomposes each sample of the Gaussian source $S$ as
    \beq{quantization}
      S = \tilde S + Q.
    \eeq
    The first term is the quantized source, while the second is the quantization error. By quadratic-Gaussian rate-distortion theory (see e.g. \cite{CoverBook}), in the limit of high quantizer dimension, a quantizer of rate $R_\digital$ may achieve:
    \begin{align}
      R_\digital = \log \left( \frac{\var{S}}{\var{Q}} \right) \,,
    \end{align}
    which is equivalent to
    \[
      \SDR_\digital \triangleq \frac{\var{S}}{\var{Q}} = 2^{R_\digital} \,.
    \]
    Now the quantizer output representing $\tilde S$ is sent over the digital channel, thus $\tilde S$ can be reconstructed exactly. Given $\tilde S$, the reconstruction error of $S$ becomes that of $Q$. That is,
    \begin{align*}
        \SDR_i = \frac{\var{S}}{\var{\hat Q_i-Q}} & =  \frac{\var{Q}}{\var{\hat Q_i-Q}} \cdot \SDR_\digital \\
        &\triangleq \SDR_{\analog,i} \; \SDR_\digital \,,
    \end{align*}
    where $\hat Q_i$ is the reconstruction of $Q$ at receiver $i$ using the SISO BC channel. Finally by \cite{Goblick65}, analog transmission of  $Q$ achieves $\SDR_{\analog,i} =  1+ h_i^2 P$, yielding the desired SDRs. No scheme can achieve better performance, by considerations similar to those leading to Proposition~\ref{prop:bound}.
\end{proof}

We use this HDA approach to prove the following.

 \vspace{3mm}
\begin{thm}
\label{thm:JSCC}
 Denote by $\bmu$ the GSV vector of the augmented matrices \eqref{augmented} of the channels with some input covariance matrix $C_\bx$.
   If
      \beq{JSCC_condition}
            \prod_{j=1}^{N_t} \bmu_j \leq 1 \leq \prod_{j=1}^{N_t-1} \bmu_j \,,
        \eeq
   then any pair $(\SDR_1,\SDR_2)$, such that $\log \SDR_i\leq I(H_i,C_\bx)$, is achievable.
\vspace{2mm}
\end{thm}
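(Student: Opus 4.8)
The plan is to reduce the two-user source-multicasting problem to the single-user HDA scenario of Proposition~\ref{prop:Mittal}, applied simultaneously to both receivers, by engineering an equivalent channel whose diagonal ratios vector is all-ones except for a single entry. Put $\rho \triangleq \prod_{j=1}^{N_t} \mu_j$; the left inequality in \eqref{JSCC_condition} reads exactly $\rho \le 1$ (equivalently $I(H_1,C_\bx) \le I(H_2,C_\bx)$, using $\det(G_i^\dagger G_i) = 2^{I(H_i,C_\bx)}$ from \eqref{eq:G_to_sum}). I would then target the ratios vector $\br$ with $r_1 = \rho$ and $r_2 = \cdots = r_{N_t} = 1$, so that the $N_t-1$ sub-channels $2,\ldots,N_t$ carry a \emph{common} digital stream (equal gains, hence equal rates, for both users) while sub-channel $1$ carries the analog part; the freedom to place $\rho$ on a chosen sub-channel is legitimate because the majorization condition of Definition~\ref{def:major} depends only on the multiset of ratios.

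The first step is to verify that this $\br$ is admissible, i.e.\ $\bmu(G_1,G_2) \succeq \br$. The products match by construction, $\prod_j r_j = \rho = \prod_j \mu_j$. Since $\rho \le 1$, the non-increasingly ordered ratios are $(1,\ldots,1,\rho)$, so I must show $\prod_{j=1}^{k}\mu_j \ge 1$ for every $k \le N_t-1$. This follows from the right inequality in \eqref{JSCC_condition}: if some partial product were below $1$, then as the $\mu_j$ are non-increasing its smallest factor would obey $\mu_k < 1$, forcing $\mu_{k+1},\ldots,\mu_{N_t-1} < 1$ and hence $\prod_{j=1}^{N_t-1}\mu_j < 1$, a contradiction. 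With majorization established, Theorem~\ref{thm:GGTD} yields a joint unitary triangularization of the augmented matrices $G_1,G_2$ of \eqref{augmented} with exactly this diagonal ratios vector.

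Next I would overlay the HDA construction on the resulting equivalent channels. Following Proposition~\ref{prop:Mittal}, decompose the source as $S = \tilde S + Q$ using a rate-$R_\digital$ quantizer, with $R_\digital \triangleq \sum_{j=2}^{N_t} R_j$ and $R_j = \log (T_{i;j,j})^2$ common to both users because $r_j = 1$ forces $T_{1;j,j} = T_{2;j,j}$. The quantized description $\tilde S$ is mapped to $N_t-1$ scalar AWGN codebooks on sub-channels $2,\ldots,N_t$, while the analog error $Q$ (scaled to power $1/N_t$) is placed on sub-channel $1$; the composite vector is precoded as in \eqref{MIMO_TX}. Each receiver applies its linear front-end and performs SIC from $j=N_t$ down to $j=1$. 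By Proposition~\ref{prop_MMSE} the SINR on sub-channel $j$ satisfies $\log(1+S_{i,j}) = R_{i,j}$, and on the common sub-channels the equal diagonals give $S_{1,j}=S_{2,j}$, so both users recover the identical digital stream, treating the analog symbol (which lies below the diagonal) as interference already counted in $S_{i,j}$.

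The SDR accounting then mirrors Proposition~\ref{prop:Mittal}. Because sub-channel $1$ is decoded last and has no below-diagonal interferers, the residual \eqref{scalar_channel} reduces to a clean scalar Gaussian channel $y'_1 = T_{i;1,1}\tilde x_1 + \tilde z_1$ with SINR $S_{i,1}$; uncoded (analog) transmission of $Q$ attains $\SDR_{\analog,i} = 1+S_{i,1} = 2^{R_{i,1}}$. Multiplying the analog and digital contributions gives $\log \SDR_i = R_{i,1} + R_\digital = \sum_{j=1}^{N_t} R_{i,j} = I(H_i,C_\bx)$, the corner of the outer bound of Proposition~\ref{prop:bound}; since each user can always independently degrade its own reconstruction, the entire box $\{(\SDR_1,\SDR_2): \log\SDR_i \le I(H_i,C_\bx)\}$ is then achievable. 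The main obstacle I anticipate is precisely this last matching: one must argue that after the digital SIC the analog sub-channel behaves \emph{exactly} as the clean SISO channel underlying Proposition~\ref{prop:Mittal} (so that the MMSE SINR is the correct one and the ``$+1$'' is supplied by the augmentation $I_{N_t}$ in \eqref{augmented}), and that placing $Q$ on the last-decoded sub-channel is what renders it interference-free; the residual error-propagation issue in the SIC is handled, as elsewhere in the paper, by letting the code block length grow.
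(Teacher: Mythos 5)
Your proposal is correct and follows essentially the same route as the paper's own proof: invoke Theorem~\ref{thm:GGTD} to obtain a joint triangularization of the augmented matrices with ratios vector equal to one except on sub-channel~$1$, transmit the digital (quantized) layer on sub-channels $2,\ldots,N_t$ at common rates, place the analog quantization error on the last-decoded sub-channel~$1$, and combine Propositions~\ref{prop_MMSE} and~\ref{prop:Mittal} to conclude $\log\SDR_i = \sum_{j=1}^{N_t}\log (T_{i;j,j})^2 = I(H_i,C_\bx)$. The only substantive difference is that you explicitly verify the majorization condition (showing the right inequality of \eqref{JSCC_condition} forces $\prod_{j=1}^{k}\mu_j\geq 1$ for all $k<N_t$), a step the paper leaves implicit; your verification is correct.
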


\begin{proof}
    By \thmref{thm:GGTD}, the condition \eqref{JSCC_condition} implies that there exists a joint unitary triangularization of the augmented channel matrices \eqref{augmented} with diagonal ratios vector which is all one except for one element. The diagonal of $T_i$ can thus be made to satisfy
    \[
      T_{1;j,j} = T_{2;j,j} \triangleq t_j \,, \qquad \forall j=2,\ldots,N_t \,.
    \]
    If we were to send digital data over the MIMO BC channel using this particular triangularization, then by \eqref{eq:G_to_sum} we could send over these $N_t-1$ channels a rate of:
    \[
      R_\digital \triangleq \sum_{j=2}^{N_t} R_j = \sum_{j=2}^{N_t} \log t_j^2 \,.
    \]
    This does not change if we replace, in the transmission scheme, $\tilde x_1$ by a different signal of the same variance $P/N_t$ (since in the SIC process, the decoding of each codebook only depends on decoding of codebooks with \emph{higher} index). Furthermore, regardless of the signal $\tilde x_1$, if the codebooks of subchannels $2,\ldots,N_t$ are correctly decoded then receiver $i$ can obtain the equivalent channel (recall \eqref{scalar_channel}):
    \[
      y'_{i;1} = \tilde T_{1;1,1} \tx_1 + \tz_{i;1} \,,
    \]
    which has, by Proposition~\ref{prop_MMSE}, a signal-to-noise ratio of
    \[\SNR_{\analog,i} = (T_{i;1,1})^2 - 1 \,. \]
    At this stage we have turned the MIMO BC channel into the combination of a digital channel of rate $R_\digital$ and a SISO BC channel of signal-to-noise ratios $\SNR_{\analog,i}$ ($i=1,2$). On account of Proposition~\ref{prop:Mittal}, one can achieve
    \begin{align*}
      \log \SDR_i &= \log (1+\SNR_\analog) + R_\digital \\
      &= \sum_{j=1}^{N_t} \log (T_{i;j,j})^2 \\
      &= I(H_i,C_\bx) \,, \qquad i=1,2 \,,
    \end{align*}
    where the last equality is on behalf of \eqref{eq:by_def}-\eqref{eq:G_to_sum}.
\end{proof}

\begin{remark}
      In fact, full triagularization is not needed. It would have been sufficient to achieve a block-triangular structure, where the interference between the last $N_t-1$ channels is arbitrary (conserving the determinant of the block in $T_i$). However, as indicated at the end of \secref{sec:GGTD}, this does not allow to relax the condition \eqref{JSCC_condition}: a block corresponding to the last $N_t - 1$ channels may always be replaced by a triangular block with a constant diagonal (see \remref{rem:FixedRatio}). Moreover, the triangular form is advantageous from the point of view of complexity (see \secref{sec:GGTD}).
\end{remark}

Theorem~\ref{thm:JSCC} does not imply that $\bar{\mS}(H_1,H_2)$ is fully achievable, since the conditions on the GSVs should be verified separately for each optimal input covariance matrix $C_\bx$. However, in the sequel we show that for $N_t\leq 2$, the condition can be verified directly on the channel matrices $H_1$ and $H_2$. Similarly, if the channel matrices are of (any) proper dimensions, then at the limit of high SNR (as the choice $C_\bx=I$ becomes optimal), the GSVs of the augmented matrices approach those of $(H_1,H_2)$, thus the condition may be applied to the channel matrices directly, verifying achievability of the whole region at once.

\subsection{Two Transmit Antennas}
\label{sub:rank2}

In this section we consider the case where $N_t = 2$. In that case, the GSV vector $\bmu(H_1,H_2)$ has two elements. We say that the GSV vector is \emph{mixed}, if one of the elements is at least one, and the other is at most one.
The following is proven in Appendix~\ref{app:rank2}.

\vspace{3mm}
\begin{lemma}
\label{lemma_rank2}
  Let $H_1$ and $H_2$ be two matrices of proper dimensions, with two columns, and define the augmented matrices (as in \eqref{augmented}):
  \[G_i = \left(  \begin{array} {c} H_i C^{1/2} \\ I_2 \end{array} \right) \,, \]
  where $C$ is some Hermitian positive semi-definite matrix.
  Then if $\bmu(H_1,H_2)$ is mixed, $\bmu(G_1,G_2)$ is mixed as well.
\vspace{2mm}
\end{lemma}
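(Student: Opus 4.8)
The plan is to reduce the \emph{mixedness} of a two-element GSV vector to the sign of a single $2\times 2$ determinant, and then to track how that determinant transforms under the augmentation and under the congruence by $C^{1/2}$.

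First I would record the Gram matrices. Writing $N_i \triangleq H_i^\dagger H_i$ and $M_i \triangleq C^{1/2} N_i C^{1/2}$, a direct computation (using that $C^{1/2}$ is Hermitian) gives $G_i^\dagger G_i = M_i + I_2$. By Definition~\ref{def:GSV}, the squared GSVs of $(H_1,H_2)$ are the two roots $t$ of the quadratic $g(t) \triangleq \det(N_1 - t N_2)$, and those of $(G_1,G_2)$ are the roots of $f(t) \triangleq \det\big((M_1+I_2) - t(M_2+I_2)\big)$. Because $H_2$ and $G_2$ are full-rank (proper dimensions), $N_2$ and $M_2+I_2$ are positive definite, so in each case the coefficient of $t^2$ (namely $\det N_2$, respectively $\det(M_2+I_2)$) is strictly positive; thus both quadratics open upward and have two positive real roots.

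The main step is the following characterization. For an upward-opening parabola with positive roots $t_1 \le t_2$, one has $t_1 \le 1 \le t_2$ if and only if its value at $t=1$ is nonpositive (since that value is a positive multiple of $(1-t_1)(1-t_2)$). Hence $\bmu(H_1,H_2)$ is mixed iff $g(1) = \det(N_1 - N_2) \le 0$, and $\bmu(G_1,G_2)$ is mixed iff $f(1) = \det(M_1 - M_2) \le 0$. The boundary cases, in which a root equals $1$, are exactly the cases of equality and are consistent with the definition of mixed, so no separate treatment is needed.

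It then remains to relate the two determinants. Since $C^{1/2}$ is Hermitian and $M_1 - M_2 = C^{1/2}(N_1 - N_2) C^{1/2}$, multiplicativity of the determinant gives $\det(M_1 - M_2) = \det(C)\,\det(N_1 - N_2)$. As $C \succeq 0$ we have $\det C \ge 0$, so the hypothesis $g(1) \le 0$ forces $f(1) \le 0$, which is exactly the claim. The only point requiring care is that $C$ may be singular: then $\det C = 0$ gives $f(1)=0$, so $1$ is itself a squared GSV of $(G_1,G_2)$ and the vector is mixed by definition. The determinant-multiplicativity argument handles this uniformly and sidesteps Sylvester's law of inertia, which would require $C$ invertible. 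I expect the conceptual crux to be the reduction of mixedness to the sign of $g(1)$; the remaining algebra is routine.
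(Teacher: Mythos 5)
Your proof is correct and takes essentially the same route as the paper's: both arguments reduce mixedness of the two-element GSV vector to the sign of the characteristic quadratic at the point $1$, and both exploit the cancellation of the identity blocks, i.e., that $G_1^\dagger G_1 - G_2^\dagger G_2 = C^{1/2}\left(H_1^\dagger H_1 - H_2^\dagger H_2\right)C^{1/2}$. The only difference is bookkeeping: the paper treats singular $C$ as a separate case (where $1$ is itself a GSV) and invokes invariance of GSVs under nonsingular congruence for the rest, whereas your identity $\det(M_1-M_2)=\det(C)\,\det(N_1-N_2)$ together with $\det C \geq 0$ absorbs both cases uniformly.
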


We use this lemma and \thmref{thm:JSCC} to prove the following.

\vspace{3mm}
\begin{corol}
Let $H_1,H_2$ be channel matrices with $N_t=2$. If $\bmu(H_1,H_2)$ is mixed, then the bounding region $\bar{\mS}(H_1,H_2)$ of Proposition~\ref{prop:bound} is achievable.
\vspace{2mm}
\end{corol}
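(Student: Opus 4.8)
The goal is to show that when $\bmu(H_1,H_2)$ is mixed (one GSV at least one, the other at most one), the entire bounding region $\bar{\mS}(H_1,H_2)$ is achievable. The natural strategy is to reduce this to \thmref{thm:JSCC}: if for \emph{every} optimal input covariance matrix $C_\bx$ the GSV condition \eqref{JSCC_condition} holds, then every boundary point of $\bar{\mS}(H_1,H_2)$ is achievable, and hence (by taking the closure over all such $C_\bx$) the whole region is covered. So the plan is to verify that the mixed-ness of $\bmu(H_1,H_2)$ guarantees condition \eqref{JSCC_condition} for the augmented matrices $G_1,G_2$ built from \emph{any} admissible $C_\bx$.

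\textbf{Reduction to the GSV condition.} With $N_t=2$, condition \eqref{JSCC_condition} reads $\bmu_1\bmu_2 \leq 1 \leq \bmu_1$, where $\bmu=\bmu(G_1,G_2)$ is ordered non-increasingly. Since the entries are ordered, $\bmu_1\geq\bmu_2$, so this pair of inequalities is exactly the statement that $\bmu(G_1,G_2)$ is \emph{mixed}: the larger GSV is at least one and the product (hence the smaller GSV, given the larger is $\geq 1$) is at most one. Thus for $N_t=2$, ``\eqref{JSCC_condition} holds'' and ``$\bmu(G_1,G_2)$ is mixed'' are the same condition. This is the key observation that lets \lemref{lemma_rank2} do the work.

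\textbf{Applying the lemma.} Fix any $C_\bx\geq 0$ with $\trace{C_\bx}\leq P$ that achieves a boundary point of $\bar{\mS}(H_1,H_2)$, and set $C=C_\bx$ so that $G_i=\big(\begin{smallmatrix}H_i C^{1/2}\\ I_2\end{smallmatrix}\big)$ are precisely the augmented matrices \eqref{augmented}. By hypothesis $\bmu(H_1,H_2)$ is mixed, so \lemref{lemma_rank2} yields that $\bmu(G_1,G_2)$ is mixed as well. By the equivalence just noted, this is exactly condition \eqref{JSCC_condition} for these augmented matrices. \thmref{thm:JSCC} then guarantees that every pair $(\SDR_1,\SDR_2)$ with $\log\SDR_i\leq I(H_i,C_\bx)$ is achievable.

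\textbf{Covering the whole region.} Finally I would take the union over all admissible $C_\bx$. Since the above holds for each individual $C_\bx$, the set of achievable SDR pairs contains $\bigcup_{C_\bx}\{(\SDR_1,\SDR_2):\log\SDR_i\leq I(H_i,C_\bx)\}$, which is exactly $\bar{\mS}(H_1,H_2)$ as defined in Proposition~\ref{prop:bound}. Combined with the outer bound $\mS(H_1,H_2)\subseteq\bar{\mS}(H_1,H_2)$ of that proposition, this gives $\mS(H_1,H_2)=\bar{\mS}(H_1,H_2)$, proving achievability of the bounding region. The one subtlety to check is that mixed-ness is preserved for \emph{every} admissible $C_\bx$ and not merely for $C_\bx=I$ or the capacity-optimal choice; but this is precisely the content of \lemref{lemma_rank2}, whose proof (deferred to Appendix~\ref{app:rank2}) is the real technical obstacle, since it must track how appending the identity block and scaling by $C^{1/2}$ affects the location of the GSVs relative to unity. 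Given the lemma, the corollary itself is immediate.
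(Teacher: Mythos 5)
Your proof has a genuine gap: the claimed equivalence between ``$\bmu(G_1,G_2)$ is mixed'' and condition \eqref{JSCC_condition} is false. For $N_t=2$, condition \eqref{JSCC_condition} reads $\mu_1\mu_2 \leq 1 \leq \mu_1$ (with $\mu_1 \geq \mu_2$), which indeed \emph{implies} mixed-ness, but the converse fails: mixed-ness only gives $\mu_1 \geq 1$ and $\mu_2 \leq 1$, and says nothing about the product. Take $\mu_1 = 3$, $\mu_2 = 1/2$: this vector is mixed, yet $\mu_1\mu_2 = 3/2 > 1$, so \eqref{JSCC_condition} does not hold and \thmref{thm:JSCC} cannot be invoked as you do in the step ``By the equivalence just noted, this is exactly condition \eqref{JSCC_condition} for these augmented matrices.'' Your argument therefore breaks down precisely on the half of the cases where the product of the GSVs of the augmented pair exceeds one.

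The missing idea — and the one extra step in the paper's own proof — is to exploit the asymmetry of \eqref{JSCC_condition} in the ordering of the two matrices: if $\prod_j \mu_j > 1$, swap the roles of $H_1$ and $H_2$. Swapping replaces the GSVs by their reciprocals (in reversed order), so the pair $(\mu_1,\mu_2) = (3, 1/2)$ becomes $(2, 1/3)$, whose product is $2/3 \leq 1$ while the largest element $1/\mu_2 \geq 1$ by mixed-ness; hence \eqref{JSCC_condition} holds for the swapped pair, and \thmref{thm:JSCC} applies with the user indices relabeled (the conclusion $\log\SDR_i \leq I(H_i,C_\bx)$ is symmetric in the labels, so nothing is lost). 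With this case split added, the rest of your structure — fixing an arbitrary admissible $C_\bx$ on the boundary, invoking \lemref{lemma_rank2} to propagate mixed-ness to the augmented matrices, and taking the union over $C_\bx$ — coincides with the paper's proof and is correct.
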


\begin{proof}
For any point on the boundary of  $\bar{\mS}(H_1,H_2)$, let $\bmu$ be the GSV vector of the augmented matrices with the corresponding $C_\bx$. By Lemma~\ref{lemma_rank2}, $\bmu$ is mixed as well. Now if the product of $\bmu$ is at most one, we can apply \thmref{thm:JSCC}. If it is greater than one, we switch the indices between $H_1$ and $H_2$, and then apply \thmref{thm:JSCC}.
\end{proof}

Unfortunately, this result cannot be generalized to the case $N_t>2$: Although at any dimension it remains true that the number of GSVs smaller or greater than one is not changed by the augmentation, this property does not hold for \emph{products} of GSVs as required for applying \thmref{thm:JSCC}.

 In order to demonstrate this result, consider the simplest example of the diagonal two-input two-output case:\footnote{Being diagonal, this channel may be obtained from a single-input single-output Gaussian inter-symbol interference channel which has a two-step frequency response, by applying the discrete Fourier transform.}

\beq{eq:two-step-as-MIMO}
  H_i = \left(  \begin{array} {c c}
		  \alpha_i & 0 \\ 0 & \beta_i
		\end{array}\right)
  \,, \quad i=1,2 \,.
\eeq

     The bounding SDR region  $\bar{\mS}(H_1,H_2)$ now becomes:
    \begin{align}
    \label{eq:two-step-bound}
        \bigcup_{0\leq\gamma\leq 1} \Big\{(\SDR_1,\SDR_2) &:  \\
    \nonumber
        &\hspace{-2cm} \SDR_i \leq \Bigl(1+|\alpha_i|^2\gamma P\Bigr)  \Bigl(1+|\beta_i|^2 (1-\gamma) P\big) \Big\}.
    \end{align}
    In this expression, $\gamma$ is the portion of the transmit power sent over the first band.

    We point out a few special cases where points on the surface of this region are achievable by known strategies.

    \begin{enumerate}
     \item
        \textbf{No BW expansion: Analog transmission.} If one of the bands has zero capacity, e.g., $\beta_1=\beta_2=0$, \eqref{eq:two-step-bound} reduces to: $\SDR_i \leq 1+ |\alpha_i|^2 P$, which is
        achievable via analog transmission \cite{Goblick65}. If for each user a different band is usable, e.g., $\alpha_1=\beta_2=0$, any transmission (digital or analog) which is orthogonal between users is optimal.

    \item
        \textbf{Equal SDRs: Digital transmission.} A point on the boundary which satisfies $\SDR_1=\SDR_2$ may be achieved by quantizing the source and then using a digital common-message (multicasting) code for the BC channel (as described in \secref{subs:Mutlicasting}.

    \item
        \textbf{One equal band: HDA transmission.} If for one of the bands the gains are equal, e.g., $|\beta_1|=|\beta_2|=\beta$, we can use that band for digital transmission with rate $R_\digital = \log(1+\beta^2 P)$ and then apply Proposition~\ref{prop:Mittal} to achieve the bound \eqref{eq:two-step-bound}.
    \end{enumerate}

\begin{figure}[t]
   \includegraphics[width=\columnwidth]{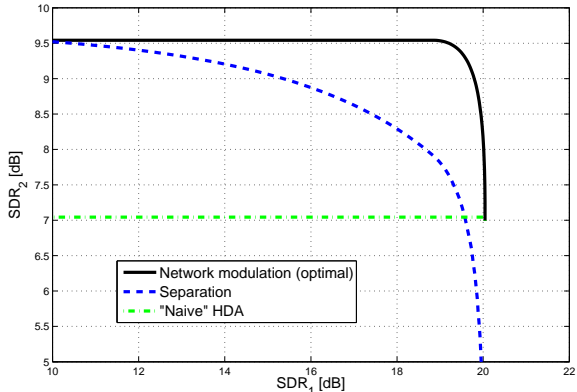}
   \vspace{-10mm}
   \caption{Performance comparison for $\alpha_1=1$, $\beta_1=10$,$\alpha_2=\beta_2=2$,$P=1$. }
   \label{fig:performance}
   \vspace{-5mm}
\end{figure}

Using network modulation, we can extend the HDA transmission (case 3 above), by transforming a diagonal channel where none of the gains is equal between users, to an equivalent triangular channel where for one of the bands the gain is equal. This can be done under the condition \eqref{JSCC_condition}, which specializes to (allowing to swap roles between matrices):
 \beq{two-step-condition} |\alpha_1|^2 \geq |\alpha_2|^2 \text{ and } |\beta_1|^2 \leq |\beta_2|^2 \eeq or vice versa. This is an ``anti-degradedness'' condition: No user can have better SNR on both bands. This condition subsumes all the cases mentioned above. It is not known whether it is a necessary condition, but at least for the case where both channels are white ($\alpha_i=\beta_i$), it was shown in \cite{Reznick} that simultaneous optimality is \emph{not} possible.

\figref{fig:performance} shows a numerical evaluation of performance for some gain values. It can be appreciated that the optimal performance imposes almost no tradeoff between users. Indeed the only tradeoff comes from the need to choose the same $C_\bx$. Thus, in the limit of high SNR, both users attain their optimal single-user performance. For comparison, we show the performance of a separation-based scheme, where a successive-refinement source code is transmitted over a digital broadcast channel code, as well as that of a ``na\"ive'' HDA scheme, where transmission is digital over one band and analog over the other.

\section{Conclusions}
\label{sec:conclusions}

This work considered the problem of transmitting analog (source) and digital information over MIMO communication networks.
For this, we proposed a new decomposition which triangularizes two matrices simultaneously using the same unitary transformation on one side, and different~-- on the other, which allowed, in turn, to accommodate a modulation to the network and the desired application. We then showed how using one version of this decomposition it is possible to construct a practical capacity-achieving scheme for two-user multicasting (which may also be useful in different relaying problems), whereas a different version of this decomposition becomes useful when transmitting the same (analog) source over two different MIMO channels. In the latter case, this technique allowed deriving new achievable regions, which proved optimal for a class of channels.


\appendices


\section{Joint Decomposition for Non-Square Matrices}
\label{app:Proof_GGTD_non-square}

In this Appendix we complete the proof of the direct part of \thmref{thm:GGTD}, by considering the general proper-dimension case.

We start by decomposing $A_i$ using the QR decomposition:
\begin{align*}
    A_i = Q_i R_i \,, \qquad i=1,2 \,,
\end{align*}
where $Q_i$ is unitary and $R_i$ is generalized upper-triangular with non-negative diagonal entries. Moreover, the GSV vectors of $(A_1,A_2)$ and $(R_1,R_2)$ are equal, $\bmu(A_1,A_2) = \bmu(R_1,R_2)$, since $A_i$ and $R_i$ are equal up to a unitary transformation on the left.

Since $A_i$ is full-rank and $m_i \geq n$,
the diagonal elements of $R_i$ are all (strictly) positive and the entries on its lower \mbox{$(m_i - n)$} rows are all zeros.
  Note that the square parts $[R_1]$ and $[R_2]$ are non-singular, with $\bmu([R_1],[R_2])=\bmu(R_1,R_2)= \bmu(A_1,A_2)$. Thus $\bmu([R_1],[R_2]) \succeq \br(T_1,T_2)$.
Invoking the proof for the square case in \secref{sec:GGTD}, we may decompose $[R_1]$ and $[R_2]$ simultaneously as:
\begin{align*}
    [R_1] &= \tU_1 \tT_1 V^\dagger \\
    [R_2] &= \tU_2 \tT_2 V^\dagger \,,
\end{align*}
where $\br(\tT_1,\tT_2)=\br(T_1,T_2)$.
Now, construct the augmented unitary matrices $P_i$:
\begin{align*}
    P_i \triangleq \left(
	    \begin{array}{cc}
	      \tU_i & 0 \\
	      0 & I_{m_i-n} \\
	    \end{array}
	  \right) \,,
\end{align*}
and the generalized triangular matrices $T_i$ of dimensions \mbox{$m_i \times n$}:
\begin{align*}
T_i \triangleq
    \left(
      \begin{array}{c}
	\tT_i \\
	0 \\
      \end{array}
    \right) \,.
\end{align*}

Thus, we arrive at the desired decomposition of $A_1$ and $A_2$ \eqref{eq:JointTriang.},
with $U_i \triangleq Q_i P_i$ and $V$.
\hfill $\blacksquare$

\section{Proof of Lemma~\ref{lemma_rank2}}
\label{app:rank2}

We first Note that if $C$ is singular, then at least one of the elements of $\bmu(G_1,G_2)$ equals one, suggesting the latter is mixed.
Thus, we are left with the case of a non-singular matrix $C$. Let $F_i \triangleq H_i C^{1/2}$ for $i=1,2$.
We claim that $\bmu(F_1,F_2)$ must be mixed. This is true, since for a non-singular matrix $C$: $\bmu(F_1,F_2)=\bmu(H_1,H_2)$.
It is left to show that if $\bmu(F_1,F_2)$ is mixed, then so is $\bmu(G_1,G_2)$. To that end, define the quadratic functions:
\begin{align*}
    p(x) &\triangleq \det \left( F_1^\dagger F_1 - x F_2^\dagger F_2 \right) \,, \\
    q(x) &\triangleq \det \left( G_1^\dagger G_1 - x G_2^\dagger G_2 \right). \end{align*}
By Definition~\ref{def:GSV}, the roots of $p(x)$ and $q(x)$ equal the square of the elements of $\bmu(F_1,F_2)$ and $\bmu(G_1,G_2)$, respectively. Thus it suffices to prove that if the roots of $p(x)$ are not on the same side of $x=1$, then so are the roots of $q(x)$.
By the positive semi-definitiveness of $F_i$ and $G_i$, both functions are convex-$\bigcup$ with $p(0),q(0),p(\infty),q(\infty)\geq 0$. By the assumption on the roots of $p(x)$, it must be that $p(1)\leq 0$. But since
\[G_1^\dagger G_1 - G_2^\dagger G_2 = F_1^\dagger F_1 - F_2^\dagger F_2 \,,\] we have that $q(1)=p(1)$, and thus $q(1)\leq 0$. Finally, a convex-$\bigcup$ continuous function which is non-negative at $x=0$ and for $x\rightarrow \infty$, and is non-positive at $x=1$ cannot have both roots on the same side of $1$.
\hfill $\blacksquare$


\bibliography{toly3}
\bibliographystyle{unsrt}

\end{document}